\documentclass{article}

\usepackage{arxiv}

\usepackage[utf8]{inputenc}
\usepackage[T1]{fontenc}
\usepackage{enumitem}
\usepackage{url}
\usepackage{listingsutf8}
\usepackage{comment}
\usepackage{nicefrac}      
\usepackage{lipsum}

\usepackage[english]{babel}

\usepackage{mathtools,amsfonts,amssymb,amsthm,mathrsfs}
\usepackage{empheq}
\usepackage{multirow}
\usepackage{changepage}
\usepackage{cases}
\usepackage[separate-uncertainty = true, multi-part-units=repeat]{siunitx}

\usepackage{graphicx} 
\usepackage{float}
\usepackage{caption}
\usepackage{subcaption}
\usepackage[export]{adjustbox}

\usepackage{hyperref}
\hypersetup{pdfauthor=author}
\usepackage{cite}
\usepackage{booktabs}
\usepackage{orcidlink}

\newtheorem{prop}{Proposition}
\newtheorem{definition}{Definition}
\newtheorem{remark}{Remark}
\newtheorem{lemma}{Lemma}
\newtheorem{corollary}{Corollary}

\DeclareSIUnit\mic{MIC}
\DeclareSIUnit\msc{MSC}
\DeclareSIUnit\mdk{MDK_{99}}
\DeclareSIUnit\mol{M}
\DeclareSIUnit\cfus{CFU}
\DeclareSIUnit\ub{ub}
\DeclareSIUnit\us{us}

\DeclarePairedDelimiter\ceil{\lceil}{\rceil}

\newcommand{\nn}

\title{Conditions for Bacterial Selection and Extinction Driven by Growth–Kill Trade‑Off in Cyclic Antimicrobial Treatments}

\author{
 $^\ast$Nerea Martínez-López \orcidlink{0000-0002-3976-9883}\\
  Biosystems \& Bioprocesses Engineering\\
  Spanish National Research Council (IIM-CSIC)\\
  Vigo, Spain, 36208 \\
  \texttt{$^\ast$nmartinez@iim.csic.es} \\
   \And
 Niclas Nordholt \orcidlink{0000-0001-5788-0801}\\
  Division of Biodeterioration and Reference Organisms\\
  Federal Institute for Materials Research and Testing (BAM)\\
  Berlín, Germany, 12205\\
   \And
 Frank Schreiber \orcidlink{0000-0003-1957-6328}\\
  Division of Biodeterioration and Reference Organisms\\
  Federal Institute for Materials Research and Testing (BAM)\\
  Berlín, Germany, 12205\\
  \And
  Míriam R. García \orcidlink{0000-0001-9538-6388}\\
  Biosystems \& Bioprocesses Engineering\\
  Spanish National Research Council (IIM-CSIC)\\
  Vigo, Spain, 36208 \\
}

\begin{document}


\maketitle


\begin{abstract}
Antimicrobial protocols — using substances such as antibiotics or disinfectants — remain the preferred option for preventing the spread of pathogenic bacteria. However, bacteria can develop mechanisms to reduce their antimicrobial susceptibility, which can lead to treatment failure and the selection of resistance or tolerance. In this work, we propose a minimal population dynamics model to study bacterial selection during cyclic antimicrobial application, a commonly used protocol. Selection in bacterial populations with heterogeneous antimicrobial susceptibility is modelled here as a trade-off between survival advantage (reduction in antimicrobial killing) and potential fitness costs (reduction in growth rate) of the less susceptible strains. The proposed model allows us to derive useful expressions for determining the success of cyclic antimicrobial treatments based on two bacterial traits: growth and kill rates. The results obtained here are directly applicable to preventing the selection and spread of resistant and tolerant bacterial strains in real-life protocols.
\end{abstract}

\keywords{Selection \and Extinction conditions \and Dynamic treatment \and Antimicrobial resistance}


\section{Introduction}\label{sec:Introduction}

Antimicrobial protocols are used daily to eliminate bacteria and prevent the spread of pathogens in home~\cite{2016Wesgate}, industrial or clinical~ settings. However, bacteria from the same population can respond to treatment very differently, as a result of genetic or phenotypic changes altering antimicrobial susceptibility, among other bacterial characteristics. That is the case of antimicrobial resistance (AMR) or tolerance, which can seriously compromise antimicrobial treatment. In these scenarios, even a small subpopulation of bacteria with reduced antimicrobial susceptibility can survive treatment, making evolutionary processes—mutation, selection, and competition—highly relevant for understanding treatment failure and development of AMR and tolerance~\cite{nordholt_persistence_2021}. Then, accurately predicting the dynamics of AMR and tolerance selection under biocidal stress is essential for systematic optimisation of antimicrobial protocols~\cite{acar_enhanced_2019}.

Although the evolutionary process is fundamentally stochastic, especially in small populations \cite{coates2018antibiotic}, outcomes become more predictable and often repeatable under strong and recurrent selective pressures, such as those imposed by cyclic biocide treatments \cite{charlebois_quantitative_2023}. Mutation itself is random; however, in practice only a few mutant lineages tend to expand and dominate the population even at antimicrobial concentrations far below the minimum inhibitory concentration~\cite{petrungaro_antibiotic_2021}. In other words, evolution is highly repeatable when the efficiency of selection is high relative to the stochasticity of genetic drift, mutation, recombination and unpredictable environmental changes. Consequently, most mathematical models of antimicrobial resistance (AMR) require deterministic equations~\cite{wortel_towards_2023}, typically systems of Ordinary Differential Equations (ODEs)~\cite{schardong_mapping_2026}.

If the process can be assumed to be deterministic under strong selective pressures, as is often observed in cyclic treatments or in treatments that are highly bactericidal, then it becomes necessary to identify a useful and experimentally applicable theory that will allow us to determine the conditions under which bacterial populations will go extinct during these treatment regimens. When a population is driven to extinction, it is also important to estimate the time to extinction so that unnecessary additional treatment cycles can be avoided. Conversely, when survival occurs, it becomes essential to characterise the conditions under which resistant or tolerant subpopulations gain a selective advantage, enabling treatment protocols to be adjusted to minimise their enrichment. 

The establishment of this theoretical framework constitutes the main unresolved challenge, and overcoming it is indispensable for the optimisation of effective disinfection or sanitation protocols \cite{wortel_towards_2023,2024Nordholt}. Recent research has derived extinction conditions for bacterial populations exposed to antimicrobials, although this work assumes a constant antimicrobial stress without cyclic or time‑varying concentration dynamics, and accounts for variability through stochastic fluctuations in growth rather than through the explicit coexistence of strains with different traits \cite{mansour_stochastic_2023}. Other studies use the birth–death stochastic process to investigate population dynamics under sequential therapies, capturing complex evolutionary phenomena such as collateral sensitivity but without providing explicit analytical conditions for selection or extinction \cite{anderson_invariant_2025}. As a consequence, these approaches require numerical solution through dedicated in silico tools that must be recalibrated for each specific scenario and that usually depended on large parameter sets \cite{anderson_computational_2025}. Collectively, these studies motivate simple, analytically tractable models of strain competition and extinction under cyclic antimicrobial killing, with minimal population dynamics models appearing most useful. However existing results provide selection conditions only for growth trait evolution under serial dilution without antimicrobial treatment \cite{2020Lin}.



In this work, we study selection during cyclic antimicrobial application for bacterial populations formed by multiple strains. First, we propose a mathematical model to describe the population dynamics under cyclic treatment, consisting of successive growth and antimicrobial-killing periods.  The model formulation is kept simple to study selection as a function of the minimum bacterial characteristics (traits) to describe the dynamics of the different strains, namely, growth rates during recovery and kill rates in response to antimicrobial killing. Then, the mathematical model is used to derive conditions for determining extinction or selection of the different strains based on the bacterial traits and the setup parameters for cyclic treatment. We also provide important measures to quantify selection and compare the selective ability of the strains, such as the extinction cycles and the selection coefficients. The resulting approach provides directly applicable criteria for preventing the selection and spread of resistant or tolerant bacteria in real-life antimicrobial protocols.

\section{The mathematical model}\label{sec:MathModel}

\subsection{Model assumptions}\label{sec:MathModel:subsec:Assumptions}

Consider a bacterial population formed by $n>1$ distinct strains $\{S_i\mid 1\leq i\leq n\}.$ The population is subjected to a cyclic treatment that alternates growth periods with killing periods via some bactericidal antimicrobial, e.g., an antibiotic or disinfectant. The duration of the growth periods $(t_g>0)$ and the killing periods $(t_k>0)$ remains constant between cycles, as well as the applied antimicrobial concentration $(C\geq 0).$ We provide, though, an extension of the model 
to the case where the antimicrobial dose varies across cycles. 
Dilution into fresh medium  by a factor $0<D<1$ follows each killing period to remove the antimicrobial and provide the necessary resources for propagating bacteria to the next cycle. 

During the growth periods, the strains grow exponentially until the total population reaches the saturation concentration (or carrying capacity $K>0),$ once the resources are exhausted. When no more resources are available, bacteria enter the stationary phase and can no longer reproduce. Then, the strains indirectly compete  for the available resources during growth;  the fastest-growing strains will have a competitive advantage because they can convert resources into biomass more rapidly before saturation. Conversely, the slower-growing strains will be at a competitive disadvantage, as they may not be able to reproduce significantly before resources are exhausted. 
Thus, we will consider the exponential growth rates $\{\mu_i>0\mid 1\leq i\leq n\}$ as the main parameters characterising the population dynamics during growth. For simplicity, we ignore the influence of other growth-related parameters, such as bacterial lag or growth efficiency (proportion of the resource converted by bacteria into biomass). 

The growth periods are followed by killing periods, during which we assume the strains' concentration decay exponentially  (first-order kinetics). The kill rates $\{k_i(C)\geq 0\mid 1\leq i\leq n\}$  are considered the main parameters characterising the population dynamics during antimicrobial killing. We thus ignore bimodal killing kinetics arising from phenotypic heterogeneity in the strains, such as heteroresistance \cite{martinez-lopez_moment_2024_bis} or persistence \cite{nordholt_persistence_2021}. On the other hand, we assume that the strains cannot growth during the killing periods 

To confer the model more realism, we assume the strains cannot reproduce below an extinction limit $X_e>0.$ We will generally consider that the initial cell concentration (inoculum) for the strains, $\{X_{i0}\mid 1\leq i\leq n\},$ verify $X_e\leq X_{i0},$  so that they can reproduce during the first growth period, unless stated otherwise (e.g., when studying the dynamics for the strains in isolation). 

The duration of the growth and killing  periods, the carrying capacity, the strains' inocula, the dilution factor, the applied antimicrobial concentration, and the extinction limit, will be called the setup parameters for cyclic treatment. On the other hand, the growth  and kill rates are the main parameters characterising the population dynamics, so that we will refer to them as the bacterial traits. Note that we can assume the carrying capacity is a setup parameter (adjusted through the initial resource concentration in the culture medium) because we ignore differences in growth efficiency between the strains. 

\subsection{Model formulation}\label{sec:MathModel:subsec:Formulation}

Let $\{X_i(t)\mid 1\leq i\leq n\}$ denote the cell concentrations of the different strains in the population at any time $t\geq 0.$ Taking into account the hypothesis and parameters described in~\nameref{sec:MathModel:subsec:Assumptions}, the following Ordinary Differential Equations (ODEs) describe the dynamics of the bacterial population under cyclic antimicrobial treatment:
\begin{align}\label{eq:ODEs_simple_model}
 \dfrac{dX_i}{dt}(t) &= \left(\mu_i  I_g(t)I_{e}\left(X_{i,c}^0\right) - k_i(C)  I_k(t)\right)X_i(t),\quad t_{c}^0 < t < t_{c}^f,\quad c\geq 1, \quad 1\leq i\leq n.
\end{align}

The duration of the growth and killing periods remain fixed, so that $t_c^0=(c-1)\left(t_g+t_k\right)$ and $t_c^f=c\left(t_g+t_k\right)$ are the initial and final time for the $c$-cycle, respectively. Note that the final time of the $c$-cycle coincides with the initial time of $c+1,$ i.e., $t_{c}^f=t_{c+1}^0.$  We denote $X_{i,c}^0 =X_i(t_c^0)$ the initial condition for ODEs~\eqref{eq:ODEs_simple_model} at the $c$-cycle, which is:
\begin{align}\label{eq:IC_ODEs_simple_model}
   X_{i,c}^0&=
    \begin{cases}
    X_{i0},& c=1,\\
    D X_i(t_{c-1}^{f-}),& c>1,
    \end{cases}\quad X_i(t_{c-1}^{f-})=\lim_{t\rightarrow t_{c-1}^{f-}} X_i(t),\quad 1\leq i\leq n.
\end{align}

The functions $I_g,$ $I_e$ and $I_k$ involved in ODEs~\eqref{eq:ODEs_simple_model} are indicators for exponential growth, strains' extinction and antimicrobial killing, respectively. In first place, the function $I_g$ indicates saturation in the total cell concentration due to resources exhaustion, so that:
\begin{align}\label{fun:I_g}
    I_g(t) = \begin{cases}
        1,& t \in \left(t_c^0, t_c^0 + t_c\right)\textmd{ for some }c\geq 1,\\
        0,&\textmd{i.o.c.},
    \end{cases}\qquad t_c = \min\{t_g, t_c^{sat}\},
\end{align}
where $t_c$ denotes the time spent by the strains in the exponential growth phase at the $c$-cycle, and  $t_c^{sat}$ is the saturation time. That is, $t_c^{sat}$ denotes the time necessary for the population to reach the carrying capacity at the $c$-cycle. Note that the saturation time for the $c$-cycle is the unique solution (if any) of the following equation:
\begin{align}\label{eq:t_sat}
       \sum_{i=1}^n X_{i,c}^0\exp\left(\mu_{i}t_c^{sat}I_e(X_{i,c}^0)\right)=K,\quad c\geq 1,
\end{align}
which can only be solved numerically in the most general case. Nevertheless, as we will see, the model analysis can be fairly developed without using any closed-form expression for the saturation time in Eq.~\eqref{eq:t_sat}. An approximation is provided, though, for cases where this is necessary (see~ \nameref{sec:MathModel:subsec:Approx_tsat}). 

On the other hand, the function $I_e$ in ODEs~\eqref{eq:ODEs_simple_model} accounts for strains' extinction, so that they cannot reproduce during the $c$-cycle if their concentration is below the extinction limit at the start of the cycle. That is:
\begin{align}\label{fun:I_E}
    I_e\left(X_{i,c}^0\right) = \begin{cases}
        1,& X_{i,c}^0 \geq X_e,\\
        0,&\textmd{i.o.c.,}
    \end{cases}\quad c\geq 1,\quad 1\leq i\leq n.
\end{align}

Finally, the function $I_k$ in ODEs~\eqref{eq:ODEs_simple_model} indicates the periods of antimicrobial killing:
\begin{align}\label{fun:I_k}
    I_k(t) = \begin{cases}
        1,& t \in \left(t_c^0+t_g, t_c^f\right)\textmd{ for some }  c\geq 1,\\
        0,&\textmd{i.o.c.}
    \end{cases}
\end{align}

\begin{remark}\label{rmk:sol_tsat}
    If Eq.~\eqref{eq:t_sat} has a solution, it is unique. On the other hand, the solution to Eq.~\eqref{eq:t_sat} exists if and only if the cell concentration of some strain is above the extinction limit at the start of the $c$-cycle. Otherwise, the population cannot grow, nor therefore reach saturation. Note also that the unique solution of Eq.~\eqref{eq:t_sat} verifies $t_c^{sat}\leq 0$ if the total cell concentration at the start of the $c$-cycle is such that:
    $$\sum_{i=1}^n X_{i,c}^0 \geq K.$$ 
    
    In that case, the duration of the exponential growth phase at the $c$-cycle would be $t_c=0.$ Nevertheless, this scenario can occur only when the total cell concentration at the start of treatment exceeds the carrying capacity, since the population is diluted at the end of each cycle (see Eq.~\eqref{eq:IC_ODEs_simple_model}).
\end{remark}

\begin{remark}\label{rem:limXf}
    The cell concentrations $\{X_i(t)\mid 1\leq i\leq n\}$ are considered discontinuous at $t_c^f=t_{c+1}^0$ for any cycle $c\geq 1,$ since Eq.~\eqref{eq:IC_ODEs_simple_model} considers an instantaneous dilution step at the end of each cycle. That is:
    \begin{align*}
     \lim_{t\rightarrow t_c^{f^+}} X_i(t) = X_{i,c+1}^0 = D \lim_{t\rightarrow t_c^{f-}} X_i(t)\underbrace{\neq}_{0<D<1} \lim_{t\rightarrow t_c^{f-}} X_i(t),\quad c\geq 1,\quad 1\leq i\leq n.
    \end{align*} 
    
    Note that Eqs.\eqref{eq:ODEs_simple_model}--\eqref{eq:IC_ODEs_simple_model} imply that the cell concentrations at the final time $t_c^f$ are:
    \begin{align*}
        X_i(t_c^f) = X_i(t_{c+1}^0)=X_{i,c+1}^0,\quad c\geq 1,\quad 1\leq i\leq n.
    \end{align*}
\end{remark}

The model for cyclic antimicrobial treatment given in Eqs.~\eqref{eq:ODEs_simple_model}--\eqref{fun:I_k} has the following (unique) analytical solution for the cell concentrations $\{X_i(t)\mid 1\leq i\leq n\}$ of the different strains:
\begin{align}\label{eq:sol_ODEs}
    X_i(t) = 
    X_{i,c}^0\exp\left(\mu_i\min\{t_c,t - t_c^0\}I_e\left(X_{i,c}^0\right) - k_i(C)\max\{0,t-(t_c^0 +  t_g)\}\right),\;\; t_c^0 \leq t < t_c^f,\;\; c\geq 1,
\end{align}
where $X_{i,c}^0$ and $I_e\left(X_{i,c}^0\right)$ are calculated recursively using Eq.~\eqref{eq:IC_ODEs_simple_model} and Eq.~\eqref{fun:I_E}, respectively. 

The analytic solution in Eq.~\eqref{eq:sol_ODEs} is rather cumbersome and will not be applied directly for the analysis and demonstrations in this work. Instead, we will repeatedly use the cell concentrations at the end of each cycle, which are:
\begin{align}\label{eq:Xf}
    X_i\left(t_c^f\right) = 
D^cX_{i0}\exp\left(\mu_i\sum_{q=1}^{c}t_qI_e\left(X_{i,q}^0\right) - ck_i(C)t_k\right),\quad c\geq 1,\quad 1\leq i\leq n,
\end{align}
and the following expression, relating the cell concentrations at the initial and final time of each cycle:
\begin{align}\label{eq:Xf_fun_X0}
    X_i\left(t_c^f\right) = 
        DX_{i,c}^0\exp\left(\mu_it_cI_e\left(X_{i,c}^0\right) - k_i(C)t_k\right),\quad c\geq 1,\quad 1\leq i\leq n.
\end{align}

\begin{remark}
The model in Eqs.~\eqref{eq:ODEs_simple_model}--\eqref{fun:I_k} can be easily extended to cyclic treatments for which the antimicrobial dose varies between cycles. In that case,  ODEs~\eqref{eq:ODEs_simple_model} transform to:
\begin{align}\label{eq:ODEs_simple_model_varC}
\dfrac{dX_i}{dt}(t) &= \left(\mu_i  I_g(t)I_{e}\left(X_{i,c}^0\right) - k_i(C_c)  I_k(t)\right)X_i(t),\quad t_{c}^0 < t < t_{c}^f,\quad  1\leq i\leq n,\quad c\geq 1,
\end{align}
where $C_c\geq 0$ is the antimicrobial concentration applied at the $c$-cycle, and Eqs.~\eqref{eq:IC_ODEs_simple_model}--\eqref{fun:I_k} remain unchanged. Then, the kill rates for the different strains at the applied antimicrobial doses should be considered in the model as independent parameters. One alternative is to consider some pharmacodynamic function relating the kill rates of the strains with the antimicrobial concentration. For example, we can use the following Hill-type dependence:
\begin{align*}
    k_i(C) = k_{max,i} \frac{C^{H_i}}{C^{H_i} + EC_{50,i}^{H_i}},\quad 1\leq i\leq n,
\end{align*}
where $k_{max,i}$ is the maximum kill rate, $EC_{50,i}$ is the half-maximum inhibitory concentration, and $H_i$ is the Hill exponent. In this case, only three parameters are needed per strain to consider antimicrobial killing at any concentration.

Note that the extended model in Eqs.~\eqref{eq:IC_ODEs_simple_model}--\eqref{fun:I_k} and \eqref{eq:ODEs_simple_model_varC} for cyclic treatment with variable antimicrobial concentration has a unique analytical solution, given by:
\begin{align}
    X_i(t) = 
    X_{i,c}^0\exp\left(\mu_i\min\{t_c,t - t_c^0\}I_e\left(X_{i,c}^0\right) - k_i(C_c)\max\{0,t-(t_c^0 +  t_g)\}\right),\;\; t_c^0 \leq t < t_c^f,\;\; c\geq 1.
\end{align}

However, the analysis of the extinction and selection dynamics for the extended model is complicated and will not be addressed here. Note that little can be said about the extinction or selection of the strains if the succession of applied antimicrobial doses $\{C_c\mid c\geq 1\}$ does not take a particular form (e.g., increases or decreases with the cycles).
\end{remark}

\subsection{An approximation for the saturation time}\label{sec:MathModel:subsec:Approx_tsat}

The model for cyclic antimicrobial treatment given by Eqs.~\eqref{eq:ODEs_simple_model}--\eqref{fun:I_k} has an analytical solution for the cell concentrations of the different strains in the population. However, these depend on the saturation times satisfying Eq.~\eqref{eq:t_sat}, which has no closed-form solution. Then, the saturation times can be numerically approximated in model simulations, but there is no closed-form expression relating the saturation times to the model parameters. 

Therefore, in this section, we derive approximate expressions for the saturation times of the bacterial population during cyclic treatment. We start by noting that, summing ODEs~\eqref{eq:ODEs_simple_model}, the total cell concentration during the exponential growth phase verifies, at any cycle $c\geq 1,$  the following initial value problem:
\begin{align*}
    \begin{cases} 
    \dfrac{dX}{dt}(t)=\sum_{i=1}^n \mu_i I_e(X_{i,c}^0) X_i(t),\\[0.1cm]
    \quad\:\: X_c^0 = \sum_{i=1}^n X_{i,c}^0,
   \end{cases}\quad t_c^0 < t < t_c^0 + t_c,\quad c\geq 1,
\end{align*}
where the total cell concentration is denoted as:
\begin{align*}
    X(t)=\sum_{i=1}^n X_i(t).
\end{align*}

The next step is to rewrite the above ODE as follows:
\begin{align*}
    \frac{dX}{dt}(t)= \left(\sum_{i=1}^n \mu_i I_e(X_{i,c}^0)  \frac{X_i(t)}{X(t)}\right) X(t) = \mu(t) X(t),\quad t_c^0 < t < t_c^0 + t_c,\quad c\geq 1,
\end{align*}
so that $\mu(t)$ is the exponential growth rate of the bacterial population, depending on the (time-varying) subpopulation frequencies of the different strains. That is, we define the growth rate of the total population as:
\begin{align*}
    \mu(t) = \sum_{i=1}^n \mu_iI_e(X_{i,c}^0)  \frac{X_i(t)}{X(t)},\quad t_c^0 < t < t_c^0 + t_c,\quad c\geq 1.
\end{align*}

Note that we can divide the previous ODE by the total cell concentration since, taking into account Eq.~\eqref{eq:sol_ODEs}, we have $X(t)>0$ at any time $t\geq 0,$ as long as $X_{i0}>0$ for some strain $S_i$ $(1\leq i\leq n).$ Additionally, the cell concentration of some strain $S_i$ should be higher than or equal to the extinction limit at the start of the $c$-cycle (i.e., $X_{i,c}^0\geq X_e),$ or it makes no sense to talk about the saturation time (see Rmk.~\ref{rmk:sol_tsat}).

We now assume that the growth rate $\mu(t)$ of the population during the $c$-cycle can be approximated by a constant value $\hat\mu_c,$ depending on the cell concentration of the different strains at the start of  growth. The proposed approximation is:
\begin{align*}
    \mu(t) \approx \hat \mu_c = \sum_{i=1}^n \mu_i I_e(X_{i,c}^0) \cfrac{X_{i,c}^0}{X_c^0},\quad t_c^0 < t < t_c^0 + t_c,\quad c\geq 1,
\end{align*}
and, consequently, the solution to the ODE describing the total cell concentration dynamics during exponential growth can be approximated as:
\begin{align*}
 X\left(t\right)\approx X_c^0\exp\left(\hat \mu_c t\right),\quad t_c^0 < t < t_c^0 + t_c,\quad c\geq 1,
\end{align*}

Then, we can finally obtain the desired approximation for the saturation time by solving the following equation:
\begin{align*}
X\left(t_c^0+t_c^{sat}\right)\approx X_c^0\exp\left(\hat \mu_c t_c^{sat}\right)=K,\quad c\geq 1,
\end{align*}
whose solution is:
\begin{align}\label{eq:t_sat_approx}
    t_c^{sat} \approx \hat t_c^{sat}=\frac{X_{c}^0}{\sum_{i=1}^n \mu_i I_e(X_{i,c}^0) X_{i,c}^0}\log\left(\frac{K}{X_{c}^0}\right),\quad c\geq 1.
\end{align}

\section{Extinction and selection conditions}\label{sec:Sel&Ext_Cond}

The mathematical model in Eqs.~\eqref{eq:ODEs_simple_model}--\eqref{fun:I_k}  is useful for determining the outcome of cyclic antimicrobial treatment, e.g., which strains will eventually be selected, which will go extinct, or the conditions favouring the selection of one strain over the others. That is particularly interesting when the bacterial population consists of an ancestor strain and several resistant or tolerant mutants, which show reduced antimicrobial susceptibility at the expense of a fitness cost (typically reduced growth)~\cite{2015Melnyk,nordholt_persistence_2021,2024Nordholt}. Then, the trade-off between increased survival and fitness cost of the evolved mutants determines the selection dynamics under cyclic treatment and, in particular, whether the mutants outcompete the ancestor. 

Throughout this section, we use Eqs.~\eqref{eq:ODEs_simple_model}--\eqref{fun:I_k} to derive simple expressions for determining the extinction or selection of the different strains in the long term (i.e., when $c\rightarrow \infty)$ depending on the bacterial traits and the setup parameters for the cyclic protocol. First, let us define the concepts of strains' survival and extinction used here.

\begin{definition}\label{def:ext}
Consider a bacterial population formed by strains $\{S_i\mid 1\leq i\leq n\}$ subjected to cyclic antimicrobial treatment, as described by Eqs.~\eqref{eq:ODEs_simple_model}--\eqref{fun:I_k}. 
We say that:
\begin{enumerate}[label=\arabic*)]
    \item  Strain $S_i$ goes extinct if $\lim_{c\rightarrow \infty} X_i\left(t_c^f\right) =0.$ 
    \item Strain $S_i$ goes extinct in isolation if $\lim_{c\rightarrow \infty} X_i\left(t_c^f\right) =0$ when $X_{j0}=0$ for $1\leq j\leq n,$ $j\neq i.$
    \item  The total population goes extinct (total extinction) if strain $S_i$ goes extinct for $1\leq i\leq n.$
    \item  Strain $S_i$ survives if $\lim_{c\rightarrow \infty} X_i\left(t_c^f\right) = X_{eq,i}>0,$ where $X_{eq,i}$ is the equilibrium concentration for $S_i.$
\end{enumerate}
\end{definition}

The extinction of strain $S_i$ in isolation is a particular case of extinction for which $S_i$ undergoes cyclic treatment in the absence of $\{S_j\mid 1\leq j\leq n,\;j\neq i\},$ i.e., independently of the competition with other strains. This concept will serve as a basis for studying the selection dynamics of the strains in competence. Note that Eq.~\eqref{eq:t_sat} has a closed-form solution for the saturation time when strain $S_i$ evolves in isolation, so that the time spent by the strain in the exponential growth phase at the $c$-cycle is:
\begin{align}\label{eq:tsat1_isol}
t_{i,c} = \min\{t_g,t_{i,c}^{sat}\},\quad t_{i,c}^{sat} = \frac{1}{\mu_i}\log\left(\frac{K}{X_{i,c}^0}\right),\quad c\geq 1,\quad 1\leq i\leq n.
\end{align}

Therefore, if strain $S_i$ is subjected to the cyclic protocol in isolation, the expression in Eq.~\eqref{eq:Xf} for the cell concentration at the end of the $c$-cycle takes the form:
\begin{align}\label{eq:Xf_isol}
    X_i(t_c^f)=D^c X_{i0}\exp\left(\mu_i\sum_{q=1}^c t_{i,q} I_e(X_{i,q}^0) - c k_i(C)t_k\right),\quad c\geq 1,\quad 1\leq i\leq n,
\end{align}
and, similarly, Eq.~\eqref{eq:Xf_fun_X0} relating the cell concentration at the initial and final time of each cycle transforms to:
\begin{align}\label{eq:Xf_fun_X0_isol}
    X_i(t_c^f)=&\;D X_{i,c}^0\exp\left(\mu_i t_{i,c} I_e(X_{i,c}^0) - k_i(C)t_k\right),\quad c\geq 1,\quad 1\leq i\leq n.
\end{align}  


First, we will proceed to analyse the conditions for total population extinction. 
If none of the strains survive cyclic treatment, the antimicrobial's selective pressure is enough to eliminate the population and prevent bacterial selection. Then, the selective part relying on cell survival under antimicrobial effect mainly determines the treatment outcome, and it makes no sense to study the selection component driven by bacterial competition (which is determined by the selection conditions, as we will see).

To obtain the conditions for total population extinction, we need first the following preliminary results, derived directly from Def.~\ref{def:ext} and Eqs.~\eqref{eq:ODEs_simple_model}--\eqref{fun:I_k}.

\begin{lemma}\label{lemma:tsat_isol}
   The saturation time for strain $S_i$ $(1\leq i\leq n)$ under cyclic treatment in isolation verifies:
    \begin{align*}
        t_c^{sat} \leq t_{i,c}^{sat},\quad c\geq 1.
    \end{align*}
\end{lemma}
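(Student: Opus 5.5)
The plan is to compare the two defining equations directly. For $t_c^{sat}$ we use Eq.~\eqref{eq:t_sat}, and for $t_{i,c}^{sat}$ the closed form in Eq.~\eqref{eq:tsat1_isol}. The statement is read cycle by cycle: at a fixed cycle $c$, with the population initial conditions $X_{j,c}^0$, the mixed population reaches $K$ no later than strain $S_i$ alone would when started from the same $X_{i,c}^0$. We assume $X_{i,c}^0\geq X_e$, so that $S_i$ actually grows and both saturation times are defined (Rmk.~\ref{rmk:sol_tsat}). If $X_{i,c}^0<X_e$, the isolated strain never saturates, and we interpret $t_{i,c}^{sat}=+\infty$, which makes the inequality trivial.

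The first step is to introduce the function
\begin{align*}
F(s)=\sum_{j=1}^n X_{j,c}^0\exp\left(\mu_j s\, I_e(X_{j,c}^0)\right),
\end{align*}
whose unique root of $F(s)=K$ is $t_c^{sat}$. Since $X_{i,c}^0\geq X_e$ and $\mu_i>0$, the $i$-th term is strictly increasing in $s$. Every other term is nondecreasing and nonnegative. Hence $F$ is strictly increasing and continuous, which is the uniqueness claimed in Rmk.~\ref{rmk:sol_tsat}.

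Next, we evaluate $F$ at $s=t_{i,c}^{sat}=\frac{1}{\mu_i}\log\bigl(K/X_{i,c}^0\bigr)$. The $i$-th term equals $X_{i,c}^0\cdot K/X_{i,c}^0=K$ exactly. The remaining terms are nonnegative, so $F(t_{i,c}^{sat})\geq K=F(t_c^{sat})$. Strict monotonicity of $F$ then gives $t_c^{sat}\leq t_{i,c}^{sat}$.

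The only delicate point is the interpretation issue described above. In isolation, the trajectory $X_{i,c}^0$ at later cycles differs from the trajectory in competition. The lemma must therefore be read with $t_{i,c}^{sat}$ evaluated at the same initial concentration $X_{i,c}^0$, as Eq.~\eqref{eq:tsat1_isol} literally writes it. We will state this convention explicitly. With that convention fixed, the argument above is complete.
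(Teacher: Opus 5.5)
Your proof is correct and essentially follows the paper's argument: both compare Eq.~\eqref{eq:t_sat} with the isolated saturation equation at the same initial concentration $X_{i,c}^0$ and discard the nonnegative contributions of the other strains. The difference is cosmetic: the paper drops those terms at $s=t_c^{sat}$ and compares only the single exponential $X_{i,c}^0 e^{\mu_i s}$, whereas you evaluate the full sum at $s=t_{i,c}^{sat}$ and use monotonicity of $F$. Your explicit handling of the case $X_{i,c}^0<X_e$ is a welcome clarification, since the paper's final implication silently needs $I_e(X_{i,c}^0)=1$.
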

\begin{proof}
   It should simply be noted that:
    \begin{align*}
        X_{i,c}^0\exp\left(\mu_i I_e(X_{i,c}^0)t_c^{sat}\right) + \sum_{l=1,l\neq i}^n X_{l,c}^0\exp\left(\mu_l I_e(X_{l,c}^0)t_c^{sat}\right) \underbrace{=}_{\text{Eq.~\eqref{eq:t_sat}}} K \underbrace{=}_{\text{Eq.~\eqref{eq:tsat1_isol}}} X_{i,c}^0\exp\left(\mu_i I_e(X_{i,c}^0) t_{i,c}^{sat}\right),\quad c\geq 1,
    \end{align*}
    and, since $X_{l,c}^0\geq 0$ for any strain $S_l$ $(1\leq l\leq n),$ we have:
    \begin{align*}
        X_{i,c}^0\exp\left(\mu_i I_e(X_{i,c}^0)t_c^{sat}\right) \leq  X_{i,c}^0\exp\left(\mu_i I_e(X_{i,c}^0) t_{i,c}^{sat}\right)\Rightarrow t_c^{sat} \leq t_{i,c}^{sat},\quad c\geq 1.
    \end{align*}
\end{proof}

\begin{lemma}\label{lemma:EC_isol}
   If strain $S_i$ $(1\leq i\leq n)$ goes extinct  under cyclic treatment in isolation, then it goes extinct.
\end{lemma}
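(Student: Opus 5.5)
The plan is a comparison argument. We show by induction on $c$ that the concentration of $S_i$ at the end of every cycle in the mixed population is bounded above by its concentration in the isolated system with the same inoculum. Write $Y_{i,c}^0$ and $Y_i(t_c^f)$ for the isolated trajectory (Eqs.~\eqref{eq:Xf_isol}--\eqref{eq:Xf_fun_X0_isol}), with growth durations $t_{i,c}=\min\{t_g,t_{i,c}^{sat}\}$. The claim is
\begin{align*}
X_{i,c}^0\leq Y_{i,c}^0\quad\text{and}\quad X_i(t_c^f)\leq Y_i(t_c^f),\quad c\geq 1.
\end{align*}
Since $Y_i(t_c^f)\to 0$ by hypothesis and $X_i(t_c^f)\geq 0$, the squeeze gives $\lim_{c\to\infty}X_i(t_c^f)=0$, which is extinction in the sense of Def.~\ref{def:ext}.

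The base case is $X_{i,1}^0=X_{i0}=Y_{i,1}^0$. For the inductive step we assume $X_{i,c}^0\leq Y_{i,c}^0$ and compare the two one-cycle maps given by Eq.~\eqref{eq:Xf_fun_X0} and Eq.~\eqref{eq:Xf_fun_X0_isol}. The killing factor $\exp(-k_i(C)t_k)$ and the dilution $D$ are identical in both, so only the growth factors need comparing. If $X_{i,c}^0<X_e$, the mixed strain does not grow, so $X_i(t_c^f)=DX_{i,c}^0e^{-k_it_k}$. In the isolated system the strain either also does not grow or grows by a factor at least $1$, and in both cases $X_i(t_c^f)\leq Y_i(t_c^f)$. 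If $X_{i,c}^0\geq X_e$, then $Y_{i,c}^0\geq X_e$ as well, so both strains grow.

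The key case is when both grow. Here I would not compare the growth durations directly. Lemma~\ref{lemma:tsat_isol} gives $t_c^{sat}\leq t_{i,c}^{sat}$ only for the same starting value, whereas here the isolated system starts from the larger value $Y_{i,c}^0$. Instead I would compare the post-growth concentrations. In the mixed system, the pre-kill concentration $X_{i,c}^0e^{\mu_i t_c}$ is at most $\min\{X_{i,c}^0e^{\mu_i t_g},K\}$:
\begin{itemize}
\item $t_c\leq t_g$ by Eq.~\eqref{fun:I_g};
\item $X_{i,c}^0e^{\mu_i t_c^{sat}}\leq K$, because all the other summands in Eq.~\eqref{eq:t_sat} are nonnegative (the same observation as in Lemma~\ref{lemma:tsat_isol}).
\end{itemize}
In the isolated system, the pre-kill value $Y_{i,c}^0e^{\mu_i t_{i,c}}$ equals exactly $\min\{Y_{i,c}^0e^{\mu_i t_g},K\}$ by Eq.~\eqref{eq:tsat1_isol}. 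The map $x\mapsto\min\{xe^{\mu_it_g},K\}$ is nondecreasing, so the mixed pre-kill value is bounded by the isolated one. Multiplying by $De^{-k_it_k}$ gives $X_i(t_c^f)\leq Y_i(t_c^f)$. Since $X_{i,c+1}^0=X_i(t_c^f)$ and $Y_{i,c+1}^0=Y_i(t_c^f)$ (Rmk.~\ref{rem:limXf}), the induction closes.

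The expected obstacles are edge cases rather than the main idea:
\begin{itemize}
\item If $X_{i0}>K$, the saturation time can be nonpositive (Rmk.~\ref{rmk:sol_tsat}). I would handle this by reading $t_c=\max\{0,\min\{t_g,t_c^{sat}\}\}$ consistently in both systems, and the bound $\min\{\cdot,K\}$ is then replaced by $\max\{x,\min\{xe^{\mu_it_g},K\}\}$, which is still monotone in $x$.
\item If no strain reaches $X_e$, then $t_c^{sat}$ is undefined. This is harmless, because $S_i$ does not grow in that case.
\end{itemize}
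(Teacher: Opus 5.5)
Your proof is correct, but it takes a different and more careful route than the paper.

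\textbf{The paper's argument.} It is a one-line sandwich. It writes $X_i(t_c^f)$ via Eq.~\eqref{eq:Xf}, replaces each growth duration $t_q$ by the larger $t_{i,q}$ using Lemma~\ref{lemma:tsat_isol}, and identifies the result with the isolated trajectory \eqref{eq:Xf_isol}, which tends to zero by hypothesis.

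\textbf{Where it is incomplete.} That identification tacitly uses the same starting values $X_{i,q}^0$ in the mixed and isolated systems, and these values fix both $t_{i,q}$ and the indicators $I_e$. From cycle $2$ on the two are different trajectories, and, as you observed, Lemma~\ref{lemma:tsat_isol} only compares saturation times from a common starting value. The cycle-wise comparison between the two systems is in fact not valid in general. Take a fast competitor that saturates the medium in cycle $1$ and is then killed off. It leaves $S_i$ lower than in isolation at the start of cycle $2$, so the mixed population can grow for the full $t_g$ while the isolated strain saturates earlier; this happens, e.g., when $S_i$ survives in isolation. Under the lemma's hypothesis the cycle-wise bound does hold, because a strain that goes extinct in isolation can only saturate in cycles where both systems still start $S_i$ at the same value. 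The paper never argues this.

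\textbf{What your induction buys.} Working through the nondecreasing one-cycle map $x\mapsto De^{-k_i(C)t_k}\max\{x,\min\{xe^{\mu_it_g},K\}\}$ avoids the issue entirely. It proves the pathwise domination $X_i(t_c^f)\le Y_i(t_c^f)$ for every $c$, with no assumption on the isolated dynamics. This is equivalent to the cumulative inequality $\sum_{q\le c}t_qI_e(X_{i,q}^0)\le\sum_{q\le c}t_{i,q}I_e(Y_{i,q}^0)$, which is exactly what the sandwich needs. Your argument also handles the indicator mismatch and the $X_{i0}>K$ case, read consistently with Rmk.~\ref{rmk:sol_tsat}. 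The paper's version is shorter, but yours is the one that actually justifies the comparison.
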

\begin{proof}
    If strain $S_i$ goes extinct in isolation, using Def.~\ref{def:ext} and Eq.~\eqref{eq:Xf_isol}, we have:
    \begin{align*}
       \lim_{c\rightarrow \infty} D^c X_{i0}\exp\left(\mu_i\sum_{q=1}^c t_{i,q} I_e(X_{i,q}^0) - c k_i(C) t_k\right) = 0,
    \end{align*}
    and the proof is a direct consequence of Lem.~\ref{lemma:tsat_isol} and the Sandwich 
    Rule, because:
    \begin{align*}
       0\leq \lim_{c\rightarrow \infty} X_i(t_c^f) \underbrace{=}_{\text{Eq.~\eqref{eq:Xf}}} &\lim_{c\rightarrow \infty} D^c X_{i0}\exp\left(\mu_i\sum_{q=1}^c t_q I_e(X_{i,q}^0) - c k_i(C) t_k\right)  \underbrace{\leq}_{\text{Lem.~\ref{lemma:tsat_isol}}} \\
       &\lim_{c\rightarrow \infty} D^c X_{i0}\exp\left(\mu_i\sum_{q=1}^c t_{i,q} I_e(X_{i,q}^0) - c k_i(C) t_k\right)  = 0\Rightarrow \lim_{c\rightarrow \infty} X_i(t_c^f)=0.
    \end{align*}
\end{proof}

\begin{lemma}\label{lemma:ext_Xg}
    Strain $S_i$  $(1\leq i\leq n)$ goes extinct  under cyclic treatment if and only if $X_{i0}<X_e$ or there exist a cycle $c_{e,i}\geq 1$ such that:
    $$X_i(t_{c_{e,i}}^f) < X_e\leq X_i(t_{c_{e,i}}^0),$$ 
    where  $c_{e,i}$ is called the extinction cycle of strain $S_i.$
\end{lemma}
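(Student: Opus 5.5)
We prove the two directions separately, working throughout with the end-of-cycle relation Eq.~\eqref{eq:Xf_fun_X0} and the dilution rule Eq.~\eqref{eq:IC_ODEs_simple_model}. The key observation is that $I_e$ acts as an absorbing switch: once $X_{i,c}^0<X_e$, strain $S_i$ can no longer grow. Indeed, Eq.~\eqref{eq:Xf_fun_X0} then gives $X_i(t_c^f)=DX_{i,c}^0e^{-k_i(C)t_k}\le DX_{i,c}^0<X_e$. By Rmk.~\ref{rem:limXf} this means $X_{i,c+1}^0<X_e$ as well. By induction, $I_e(X_{i,q}^0)=0$ for all $q\ge c$, and
\begin{align*}
X_i(t_{q}^f)=D^{q-c+1}e^{-(q-c+1)k_i(C)t_k}X_{i,c}^0\le D^{q-c+1}X_{i,c}^0\xrightarrow[q\to\infty]{}0,
\end{align*}
since $0<D<1$ and $k_i(C)\ge0$.

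\emph{Sufficiency.} If $X_{i0}<X_e$, the observation applies directly with $c=1$, so $S_i$ goes extinct. If instead a cycle $c_{e,i}$ exists with $X_i(t_{c_{e,i}}^f)<X_e$, then Rmk.~\ref{rem:limXf} gives $X_{i,c_{e,i}+1}^0=X_i(t_{c_{e,i}}^f)<X_e$. The observation then applies from cycle $c_{e,i}+1$ onward, and again the limit is zero.

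\emph{Necessity.} Suppose $S_i$ goes extinct and $X_{i0}\ge X_e$. Because $\lim_{c\to\infty}X_i(t_c^f)=0<X_e$, the set $\{c\ge1 : X_i(t_c^f)<X_e\}$ is nonempty, and we let $c_{e,i}$ be its minimum. To obtain $X_e\le X_i(t_{c_{e,i}}^0)$ we distinguish two cases.
\begin{itemize}
\item If $c_{e,i}=1$, then $X_i(t_1^0)=X_{i0}\ge X_e$ by hypothesis.
\item If $c_{e,i}>1$, minimality gives $X_i(t_{c_{e,i}-1}^f)\ge X_e$, and Rmk.~\ref{rem:limXf} states that $X_i(t_{c-1}^f)=X_{i,c}^0=X_i(t_c^0)$.
\end{itemize}
Hence $X_i(t_{c_{e,i}}^0)\ge X_e$, which is exactly the required inequality.

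\emph{Main obstacle.} The delicate point is notational rather than analytical. Under the paper's convention (Rmk.~\ref{rem:limXf}), $X_i(t_c^f)$ denotes the post-dilution value and coincides with $X_i(t_{c+1}^0)$. We must therefore read the condition $X_i(t_{c_{e,i}}^f)<X_e\le X_i(t_{c_{e,i}}^0)$ consistently with that convention; otherwise the indices shift and the minimality argument must be adjusted accordingly.

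A secondary point is that the extinction limit must be crossed strictly. The indicator in Eq.~\eqref{fun:I_E} uses $\ge X_e$, so a strain sitting exactly at $X_e$ still grows. This is why the strict inequality $<X_e$ appears in the statement, and why the absorbing argument starts from a strict undershoot.
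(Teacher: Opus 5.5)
Your proof is correct and follows essentially the same route as the paper. Sufficiency holds because once the strain falls below $X_e$ the indicator $I_e$ stays off, so the factor $D^c e^{-ck_i(C)t_k}$ sends the end-of-cycle concentration to zero: the paper reads this off the closed form Eq.~\eqref{eq:Xf}, while you obtain it by induction on Eq.~\eqref{eq:Xf_fun_X0}. Necessity comes from Def.~\ref{def:ext}; your minimal-cycle argument and absorbing-state induction just make explicit what the paper leaves implicit, namely the pattern of $I_e$ values and the claim that necessity ``follows directly''.
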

\begin{proof}
\textcolor{white}{.}\\[0.2cm]
$``\Rightarrow"$ It follows directly from Def.~\ref{def:ext}.\\[0.1cm]
$``\Leftarrow"$
    If strain $S_i$ $(1\leq i\leq n)$ is such that $X_{i0} < X_e,$ we can use Eq.~\eqref{eq:Xf} with $I_e(X_{i,c}^0)=0$ for $c\geq 1,$ to get:
    \begin{align*}
       \lim_{c\rightarrow\infty} X_i\left(t_c^f\right) \underbrace{=}_{\text{Eq.~\eqref{eq:Xf}}} \lim_{c\rightarrow\infty}
        D^c X_{i0}\exp\left(\mu_i\sum_{q=1}^c t_qI_e\left(X_{i,q}^0\right) - c k_i(C)t_k\right)=
        \lim_{c\rightarrow\infty} D^c X_{i0}\exp(-ck_i(C)t_k)=0.
    \end{align*}

    Then, if $X_{i0} < X_e,$ we take $c_{e,i}=0.$ 
    Conversely, if $X_i(t_{c_{e,i}}^f)<X_e\leq X_i(t_{c_{e,i}}^0)$ for some cycle $c_{e,i}\geq 1,$  we can use Eq.~\eqref{eq:Xf} by taking into account that:
    \begin{align*}
    I_e(X_{i,c}^0)=\begin{cases} 
    1,&c\leq c_{e,i},\\
    0,&c> c_{e,i},\end{cases}\quad c\geq 1,
    \end{align*}
    to obtain:
    \begin{align*}
       \lim_{c\rightarrow\infty} X_i\left(t_c^f\right) \underbrace{=}_{\text{Eq.~\eqref{eq:Xf}}}& \lim_{c\rightarrow\infty} D^cX_{i0}\exp\left(\mu_i\sum_{q=1}^{c}t_qI_e\left(X_{i,q}^0\right) - ck_i(C)t_k\right) =\\
       &\lim_{c\rightarrow\infty} D^cX_{i0}\exp\left(\mu_i\sum_{q=1}^{c_{e,i}}t_q - ck_i(C)t_k\right) = 
       X_{i0}\exp\left(\mu_i\sum_{q=1}^{c_{e,i}}t_q\right)\lim_{c\rightarrow\infty} D^c\exp\left( - ck_i(C)t_k\right)=0.
    \end{align*}
\end{proof}


We are now ready to derive necessary and sufficient conditions for total population extinction in terms of the model parameters. Then, the following result allows us to determine the success of cyclic antimicrobial protocols in bacterial populations with different competing strains. More interestingly, Prop.~\ref{prop:EC} can be used to determine the setup parameters for cyclic treatment that prevent bacterial selection. 

\begin{prop}[\textbf{Extinction conditions}]\label{prop:EC}
    Consider a bacterial population formed by strains $\{S_i\mid 1\leq i\leq n\}$ subjected to cyclic antimicrobial treatment, as described by Eqs.~\eqref{eq:ODEs_simple_model}--\eqref{fun:I_k}. Then:
    \begin{enumerate}[label=\arabic*)]
    \item Strain $S_i$ goes extinct in isolation if and only if $X_{i0}<X_e,$ or:
    \begin{align}\label{eq:EC_isol}
    t_{eq,i} > \min\left\{t_g, \frac{1}{\mu_i}\log\left(\frac{K}{X_e}\right)\right\},
    \end{align}
    where $t_{eq,i}$ is the equilibrium time for strain $S_i,$ defined in Lem.~\ref{lemma:lim_tp}.
    \item The total population goes extinct if and only if: 
    \begin{align*}
    J= \left\{1\leq j\leq n\mid X_e\leq X_{j0},\; t_{eq,j} \leq  \min\left\{t_g, t_1 - \frac{1}{\mu_j}\log\left(\frac{X_e}{X_{j0}}\right) \right\}\right\} = \varnothing.
    \end{align*}
    \end{enumerate}
\end{prop}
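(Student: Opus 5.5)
Throughout I will use the explicit form of the equilibrium time from Lem.~\ref{lemma:lim_tp}. I am assuming it is the growth duration that exactly compensates one cycle of killing plus dilution, i.e.\ $D\exp\left(\mu_i t_{eq,i}-k_i(C)t_k\right)=1$, so that $t_{eq,i}=\left(k_i(C)t_k-\log D\right)/\mu_i$. The whole proof rests on this assumption, so it must first be checked against Lem.~\ref{lemma:lim_tp}.

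\textbf{Part 1 (isolation).} If $X_{i0}<X_e$, extinction follows from Lem.~\ref{lemma:ext_Xg}. Otherwise, by Eqs.~\eqref{eq:tsat1_isol} and \eqref{eq:Xf_fun_X0_isol}, as long as $X_{i,c}^0\ge X_e$ the start-of-cycle values obey $X_{i,c+1}^0=f(X_{i,c}^0)$ with
$$f(x)=De^{-k_i(C)t_k}\min\{xe^{\mu_i t_g},K\}=\min\{xe^{\mu_i(t_g-t_{eq,i})},\,Ke^{-\mu_i t_{eq,i}}\}.$$
\emph{Sufficiency.} Suppose first $t_{eq,i}>\frac{1}{\mu_i}\log(K/X_e)$. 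Then $f(x)\le Ke^{-\mu_i t_{eq,i}}<X_e$, so the strain drops below $X_e$ after one cycle and Lem.~\ref{lemma:ext_Xg} applies. Suppose instead $t_{eq,i}>t_g$. Then $f(x)\le \rho x$ with $\rho=e^{\mu_i(t_g-t_{eq,i})}<1$, so the iterates fall below $X_e$ after finitely many cycles and Lem.~\ref{lemma:ext_Xg} applies again.
\emph{Necessity.} If $t_{eq,i}$ is at most both bounds, then for every $x\ge X_e$ we get $f(x)\ge\min\{x,Ke^{-\mu_i t_{eq,i}}\}\ge X_e$. So $X_e$ is never crossed, and by Lem.~\ref{lemma:ext_Xg} the strain does not go extinct. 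Moreover $f$ is nondecreasing, so the orbit is monotone and bounded by $K$; it therefore converges to a positive equilibrium.

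\textbf{Part 2 (total population).} Total extinction means that every strain goes extinct. Strains with $X_{j0}<X_e$ go extinct by Lem.~\ref{lemma:ext_Xg}.

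For a strain with $X_{j0}\ge X_e$, the first cycle gives
$$X_j(t_1^f)=X_{j0}\exp\left(\mu_j(t_1-t_{eq,j})\right).$$
Hence $X_j(t_1^f)\ge X_e$ if and only if $t_{eq,j}\le t_1-\frac{1}{\mu_j}\log(X_e/X_{j0})$. This is exactly the second bound defining $J$.

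\emph{If $J=\varnothing$.} Each such strain either satisfies $t_{eq,j}>t_g$ or drops below $X_e$ in cycle~1. In the latter case Lem.~\ref{lemma:ext_Xg} gives extinction. In the former case $t_q\le t_g$ for every $q$, so Eq.~\eqref{eq:Xf} gives $X_j(t_c^f)\le X_{j0}e^{c\mu_j(t_g-t_{eq,j})}\to0$. Thus every strain goes extinct.

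\emph{If $J\neq\varnothing$.} I will show that some $j\in J$ survives. Pick $j\in J$; I would first try the $j$ with the smallest $t_{eq,j}$. The goal is the invariant $X_{j,c}^0\ge X_e$ for all $c\ge2$, and I split each later cycle into two cases.
\begin{enumerate}[label=(\roman*)]
\item If $t_c=t_g$, then strain $j$ is multiplied over the cycle by $e^{\mu_j(t_g-t_{eq,j})}\ge1$, so it stays above $X_e$.
\item If saturation occurs ($t_c=t_c^{sat}<t_g$), then the total population reaches $K$. This is the hard case, treated below.
\end{enumerate}

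\textbf{Main obstacle.} Case (ii) is where I expect the difficulty: strain $j$'s own share of $K$ could shrink as faster growers take over. My first attempt is to show $t_c\ge t_1$ for $c\ge2$. The idea is that dilution and killing leave the total at the start of cycle $c$ no larger than in cycle~1, so the population needs at least as long to saturate. With $t_c\ge t_1$, the cycle-1 inequality propagates by induction, because strain $j$'s per-cycle factor is $e^{\mu_j(t_c-t_{eq,j})}$.

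If this comparison fails, my fallback is to show that at saturation the strain with the largest growth advantage holds a share of $K$ that, after the factor $De^{-k_j(C)t_k}$, stays at least $X_e$, using the bound $t_{eq,j}\le t_1+\frac{1}{\mu_j}\log(X_{j0}/X_e)$. Once the invariant holds, convergence to a positive $X_{eq,j}$ follows as in Part~1.
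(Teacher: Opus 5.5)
Your Part~1 and the implication ``$J=\varnothing\Rightarrow$ total extinction'' are correct and follow the paper's argument; the map $f$ is a tidy repackaging of the paper's two cases. The gap is in the converse, at your case~(ii), which you leave unresolved.

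Your first attempt, $t_c\ge t_1$ for all $c\ge2$, is false in general. If cycle~1 does not saturate, then $t_1=t_g$, and the claim would force $t_c=t_g$ in every cycle. But when $t_{eq,j}<t_g$, strain $S_j$ is multiplied by $e^{\mu_j(t_g-t_{eq,j})}>1$ in every unsaturated cycle, so saturation must eventually occur with $t_c<t_g=t_1$. The premise that start-of-cycle totals never exceed the cycle-1 total fails in exactly this scenario. Even when that premise holds, a smaller total does not imply a longer saturation time once the composition shifts towards faster growers. Your fallback about $S_j$'s share of $K$ is too vague to check, and it does not identify what actually controls $t_c$.

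The missing idea is a lower bound on $t_c$ at saturating cycles, obtained by comparing two consecutive cycles. Suppose saturation occurs at cycle $c\ge2$. The total at the end of the growth phase of cycle $c-1$ is at most $K$. By Eq.~\eqref{eq:Xf_fun_X0} that total equals $\sum_l X_{l,c}^0e^{k_l(C)t_k}/D=\sum_l X_{l,c}^0e^{\mu_l t_{eq,l}}$, while $K=\sum_l X_{l,c}^0e^{\mu_l t_cI_e(X_{l,c}^0)}$. Hence
\begin{align*}
\sum_{l=1}^n X_{l,c}^0\left(e^{\mu_l t_c I_e(X_{l,c}^0)}-e^{\mu_l t_{eq,l}}\right)\ge 0.
\end{align*}
Since every $t_{eq,l}>0$, some strain $S_m$ must have $X_{m,c}^0\ge X_e$ and $t_{eq,m}\le t_c$.

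A strain that once falls below $X_e$ only decays afterwards. So $X_{m,c}^0\ge X_e$ with $c\ge2$ gives $X_{m0}\ge X_e$ and $X_m(t_1^f)\ge X_e$. Together with $t_{eq,m}\le t_c\le t_g$, this means $m\in J$, and therefore $t_c\ge t_{eq,m}\ge\min\{t_{eq,l}\mid l\in J\}=t_{eq,j}$.

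With this bound, case~(ii) is handled exactly like case~(i). The per-cycle factor $e^{\mu_j(t_c-t_{eq,j})}$ of $S_j$ is at least $1$, so your invariant $X_{j,c}^0\ge X_e$ propagates. The paper uses the same inequality, phrased as a contradiction at the putative extinction cycle of the minimiser.

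One further point: for the proposition you only need $S_j$ not to go extinct. Your closing claim that convergence to $X_{eq,j}$ ``follows as in Part~1'' does not transfer, because with competitors present $S_j$'s cycle map is no longer the one-dimensional monotone $f$. That claim is not needed, so you can simply drop it.
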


\begin{proof}
\textcolor{white}{b}\\
\vspace{-0.3cm}
    \begin{enumerate}[label=\arabic*)]
        \item 
        $``\Leftarrow"$ Let us assume that strain $S_i$ $(1\leq i\leq n)$ is subjected to cyclic treatment in isolation, so that $X_{j0}=0$ for $1\leq j\leq n,$ $j\neq i.$ If $X_{i0}<X_e,$ strain $S_i$ goes extinct by Lem.~\ref{lemma:ext_Xg}. Let us then consider $X_{i0}\geq X_e,$ and:
        \begin{align*}
            \frac{1}{\mu_i}\log\left(\frac{K}{X_e}\right) < t_{eq,i}=\frac{k_i(C)t_k - \log(D)}{\mu_i}.
        \end{align*}
        Then, we have:
        \begin{align*}
          \log\left(\frac{K}{X_e}\right) + \log(D) < k_i(C)t_k \Rightarrow 
          X_{i}(t_1^f) \underbrace{\leq}_{\text{Eq.~\eqref{eq:Xf_isol}}} DK\exp\left(-k_i(C)t_k\right) < X_e,
        \end{align*}
        and strain $S_i$ goes extinct by Lem.~\ref{lemma:ext_Xg}, with extinction cycle $c_{e,i}=1.$ On the other hand, if:
         \begin{align*}
        t_g < t_{eq,i},
        \end{align*}   
        and considering that the duration of the exponential growth phase for strain $S_i$ in isolation verifies:
        \begin{align*}
          \mu_i t_{i,c} I_e(X_{i,c}^0) + \log(D) - k_i(C)t_k\underbrace{\leq}_{\text{Eq.~\eqref{eq:tsat1_isol}}}  \mu_i t_g + \log(D) - k_i(C)t_k,\quad c\geq 1,
        \end{align*}
        then, by the sandwich rule, we obtain:
        \begin{align*}
        t_g < t_{eq,i}\Rightarrow&\; \mu_i t_g + \log(D) < k_i(C)t_k\Rightarrow \\
        &\;0\leq \lim_{c\rightarrow \infty} X_i(t_c^f) \underbrace{=}_{\text{Eq.~\eqref{eq:Xf_isol}}}\lim_{c\rightarrow \infty} X_{i0} \exp\left(\mu_i \sum_{q=1}^c t_{i,q}I_e(X_{i,q}^0) + c\log(D) - c k_i(C)t_k\right)\Rightarrow \\
        & \;0\leq \lim_{c\rightarrow \infty} X_i(t_c^f) \leq  \lim_{c\rightarrow \infty} X_{i0} \exp\left(c\left(\mu_i t_g + \log(D) - k_i(C)t_k\right)\right)=0\Rightarrow  \lim_{c\rightarrow \infty} X_i(t_c^f) =0.
        \end{align*}

        $``\Rightarrow"$ Let us assume that strain $S_i$ $(1\leq i\leq n)$ goes extinct in isolation,  i.e., when $X_{j0}=0$ for $1\leq j\leq n,$ $j\neq i,$ and consider $X_{i0}\geq X_e.$ By Lem.~\ref{lemma:ext_Xg}, there exist a cycle $c_{e,i}\geq 1$ such that:
        \begin{align*}
            X_i(t_{c_{e,i}}^f) < X_e \leq X_i(t_{c_{e,i}}^0) = X_{i,c_{e,i}}^0.
        \end{align*}

        Then, if the extinction cycle of strain $S_i$ in isolation is $c_{e,i}=1,$ we obtain:
        \begin{align*}
             X_i(t_1^f) < X_e,\;\; t_{i,1}^{sat}\leq t_g \underbrace{\Rightarrow}_{\text{Eq.~\eqref{eq:Xf_isol}}} &\; D K \exp(-k_i(C)t_k) < X_e\Rightarrow t_{eq,i}>\frac{1}{\mu_i}\log\left(\frac{K}{X_e}\right),
             \\
            X_i(t_1^f) < X_e\leq X_{i0},\;\; t_{i,1}^{sat} > t_g \underbrace{\Rightarrow}_{\text{Eq.~\eqref{eq:Xf_fun_X0_isol}}} &\; DX_{i0}\exp(\mu_it_g - k_i(C)t_k) <  X_{i0}\Rightarrow  t_{eq,i}>t_g,
        \end{align*}
        so that a necessary condition for $c_{e,i}=1$ is:
        \begin{align*}
            t_{eq,i} > \min\left\{t_g, \frac{1}{\mu_i}\log\left(\frac{K}{X_e}\right)\right\}.
        \end{align*}        
        
        On the contrary, if $c_{e,i}>1,$ note that strain $S_i$ in isolation cannot reach the saturation concentration at the extinction cycle. Otherwise, by Lem.~\ref{lemma:ext_Xg} it would be:
        \begin{align*}
          t_{i,c_{e,i}}^{sat}\leq t_g,\; c_{e,i}>1\Rightarrow  X_i(t_{c_{e,i}}^f)\underbrace{=}_{\text{Eq.~\eqref{eq:Xf_isol}}}&DK\exp\left(-k_i(C)t_k\right) < X_e \leq X(t_1^f) \Rightarrow\\
          &DK\exp\left(-k_i(C)t_k\right) < X_e \leq  X(t_1^f) \leq DK\exp\left(-k_i(C)t_k\right),
        \end{align*}
        which is impossible. Therefore, we have $t_{i,c_{e,i}}^{sat}>t_g$ when $c_{e,i} > 1,$ and a necessary condition for strain $S_i$ going extinct in isolation is:
        \begin{align*}
            X_i(t_{c_{e,i}}^f) < X_{i,c_{e,i}}^0,\;c_{e,i}>1  \Rightarrow X_i(t_{c_{e,i}}^f)\underbrace{=}_{\text{Eq.~\eqref{eq:Xf_fun_X0_isol}}}D X_{i,c_{e,i}}^0 \exp\left(\mu_i t_g - k_i(C)t_k\right) < X_{i,c_{e,i}}^0 \Rightarrow  t_{eq,i}>t_g.
        \end{align*}

       \item $``\Leftarrow"$ Let us assume that $J=\varnothing.$ Then, for any strain $S_i$ $(1\leq i\leq n),$ we have $X_{i0}<X_e,$ or:
        \begin{align*}
           t_{eq,i} > \min\left\{t_g, t_1 - \frac{1}{\mu_i}\log\left(\frac{X_e}{X_{i0}}\right) \right\},\quad 1\leq i\leq n.
        \end{align*}
        
       First, we just demonstrated in Prop.~\ref{prop:EC}.1 that strain $S_i$ goes extinct in isolation if $X_{i0} < X_e,$ or it holds that:
        \begin{align*}
            t_{eq,i} > t_g,
        \end{align*}   
        in which case strain $S_i$ goes extinct by Lem.~\ref{lemma:EC_isol}. 
        Conversely, if the above condition does not hold, strain $S_i$ should verify:
        \begin{align*}
           t_1 - \frac{1}{\mu_i}\log\left(\frac{X_e}{X_{i0}}\right) < t_{eq,i} \leq t_g,
        \end{align*}
        so that $S_i$ goes extinct by Lem.~\ref{lemma:ext_Xg}, with extinction cycle $c_{e,i}=1.$ Indeed, we have:
        \begin{align*}
            t_1 - \frac{1}{\mu_i}\log\left(\frac{X_e}{X_{i0}}\right) < t_{eq,i}\Rightarrow 
           X_i(t_1^f) \underbrace{=}_{\text{Eq.~\eqref{eq:Xf}}} D X_{i0}\exp\left(\mu_i t_1 - k_i(C) t_k\right) < X_e.
        \end{align*}

        In conclusion, the population becomes extinct when all strains go extinct in isolation, or when the strains surviving in isolation fall below the extinction limit after $c=1$ due to competition with the other strains. Note that the last scenario can only occur when $t_1<t_{i,1}$ for the strains $S_i$ surviving in isolation, i.e., only when $t_1=t_1^{sat} <t_g,$ and $X_{l0}>0$ for some $1\leq l\leq n,$ $l\neq i.$
        
        $``\Rightarrow"$ Let us assume the total population goes extinct under cyclic treatment, i.e., all strains $\{S_i\mid 1\leq i\leq n\}$ go extinct, and consider $J\neq \varnothing.$ Then, there exist some strain $S_i$ with $i\in J,$ and:
        \begin{align*}
            t_{eq,i} = \min\{t_{eq,j}\mid j\in J\}.
        \end{align*}
        
         Since strain $S_i$ goes extinct, by Lem.~\ref{lemma:ext_Xg} there exist a cycle $c_{e,i}\geq 1$ such that:
        \begin{align*}
            X_i(t_{c_{e,i}}^f) < X_e \leq  X_i(t_{c_{e,i}}^0)=X_{i,c_{e,i}}^0,
        \end{align*}
        and the extinction cycle for $S_i$ verifies $c_{e,i}>1,$ because:
        \begin{align*}
            i\in J\Rightarrow X_e\leq X_{i0},\quad t_{eq,i} < t_1 - \frac{1}{\mu_i}\log\left(\frac{X_e}{X_{i0}}\right) \Rightarrow X_i(t_1^f)\underbrace{=}_{\text{Eq.~\eqref{eq:Xf}}}DX_{i0}\exp\left(\mu_i t_1 - k_i(C)t_k\right)\geq X_e.
        \end{align*}

        Now, at the extinction cycle $c_{e,i}>1,$ it holds that:
        \begin{align*}
             X_i(t_{c_{e,i}}^f) <   X_{i,c_{e,i}}^0 \underbrace{\Rightarrow}_{\text{Eq.~\eqref{eq:Xf_fun_X0}}} t_{c_{e,i}} < t_{eq,i}=\min\{t_{eq,j}\mid j\in J\} \leq t_{eq,j},\quad j\in J,
        \end{align*}
        and the population should reach the carrying capacity at the extinction cycle, because:
        \begin{align*}
            i\in J,\quad X_i(t_{c_{e,i}}^f) <   X_{i,c_{e,i}}^0,\quad c_{e,i}>1\underbrace{\Rightarrow}_{\text{Eq.~\eqref{eq:Xf_fun_X0}}} t_{c_{e,i}} < t_{eq,i} \leq t_g.
        \end{align*}

        Then, the total cell concentration at the end of the growth period $c_{e,i}-1\geq 1$ verifies:
        \begin{align*}
            \sum_{l=1}^n X_l(t_{c_{e,i}-1}^0)\exp\left(\mu_l t_{c_{e,i}-1}I_e(X_{l,c_{e,i}-1}^0)\right) \leq  K = \sum_{l=1}^n X_l(t_{c_{e,i}}^0)\exp\left(\mu_l t_{c_{e,i}}I_e(X_{l,c_{e,i}}^0)\right),
        \end{align*}
        so that we obtain:
        \begin{align*}
            \sum_{l=1}^n X_l(t_{c_{e,i}}^0)\frac{\exp(k_lt_k)}{D}\underbrace{=}_{\text{Eq.~\eqref{eq:Xf_fun_X0}}}&\sum_{l=1}^n X_l(t_{c_{e,i}-1}^0)\exp\left(\mu_l t_{c_{e,i}-1}I_e(X_{l,c_{e,i}-1}^0)\right) \leq \\
            &\sum_{l=1}^n X_l(t_{c_{e,i}}^0)\exp\left(\mu_l t_{c_{e,i}}I_e(X_{l,c_{e,i}}^0)\right)\Rightarrow\\
            &0\leq \sum_{l=1}^n X_l(t_{c_{e,i}}^0)\left(\exp\left(\mu_l t_{c_{e,i}} I_e(X_{l,{c_{e,i}}}^0)\right) - \frac{\exp(k_l t_k)}{D}\right).
         \end{align*}
         
         Consequently, there exist some strain $S_l$ $(1\leq l\leq n)$ such that:
        \begin{align*}
        \frac{\exp(k_l t_k)}{D}\leq \exp\left(\mu_l t_{c_{e,i}}I_e(X_{l,c_{e,i}}^0)\right)\leq \exp\left(\mu_l t_{c_{e,i}}\right) \Rightarrow I_e(X_{l,c_{e,i}}^0)=1,\quad t_{eq,l}\leq t_{c_{e,i}}<t_g.
        \end{align*}
        and $l\in J,$ since strain $S_l$ is above the extinction limit at the start of cycle $c_{e,i}>1,$ and $t_{eq,l}<t_g.$ However, this is a contradiction, because:
        \begin{align*}
            t_{eq,l}\leq t_{c_{e,i}}< t_{eq,j},\quad j\in J.
        \end{align*}

        Therefore, strain $S_i$ cannot go extinct when $i\in J,$ and it should be $J=\varnothing.$
        \end{enumerate}  
\end{proof}

\begin{remark}
    The condition for total population extinction in Prop.~\ref{prop:EC}.2 involves the saturation time of the population at cycle $c=1,$ which has no closed-form expression as a function of the model parameters. Then, the condition for total population extinction should be approximated using a numerical solution of Eq.~\ref{eq:t_sat}, or the approximation provided in \nameref{sec:MathModel:subsec:Approx_tsat}.
\end{remark}

The condition in Prop.~\ref{prop:EC}.1 means that extinction under cyclic treatment in isolation depends on the equilibrium time for the strains. That is, a strains goes extinct independently of competition with the others if an equilibrium cannot be reached between new bacteria formed during growth and cell removal after killing and dilution. Then, a strain eventually extinguishes in isolation if, in each cycle, more bacteria are killed and diluted than formed during exponential growth. Additionally, the strains can be eliminated in isolation with just one cycle if antimicrobial killing and dilution are strong enough to reduce the cell concentration from the carrying capacity to values below the extinction limit.

On the other hand, the condition for total population extinction derived in Prop.~\ref{prop:EC}.2  states that the strains go extinct under the cyclic protocol in competition if they cannot survive in isolation. Nevertheless, the bacterial population can also become extinct if competition during the first growth period impedes the propagation of the strains that would otherwise survive in isolation. That is an interesting result, particularly in the case where the population is formed by an ancestor strain, together with resistant or tolerant mutants. Indeed, if the mutant strain is initially present at very low frequencies in the population (the typical situation), so that it cannot reproduce enough during the first growth period due to the strong competition with the (much more frequent) ancestor, then the mutant can be removed from the population in just one cycle, even when it would outcompete the ancestor in other case.

If the condition in Prop.~\ref{prop:EC}.2 does not hold and the bacterial population survives the cyclic protocol, the next step is to use Eqs.~\eqref{eq:ODEs_simple_model}--\eqref{fun:I_k} to determine which strains have the selective advantage and  outcompete the others. 

To derive the selection conditions, we need the following preliminary results.
\begin{lemma}\label{lemma:lim_tp}
    If strain $S_i$  $(1\leq i\leq n)$ survives cyclic treatment, then:
    \begin{align*} 
    \lim_{c\rightarrow \infty} t_c = t_{eq,i}= \frac{k_i(C)t_k - \log(D)}{\mu_i},
    \end{align*}
    where $t_{eq,i}$ is called the equilibrium time for strain $S_i.$
\end{lemma}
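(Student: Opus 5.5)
Since $S_i$ survives, Def.~\ref{def:ext} gives $\lim_{c\to\infty}X_i(t_c^f)=X_{eq,i}>0$. By Rmk.~\ref{rem:limXf} this is the same as $X_{i,c+1}^0=X_i(t_c^f)\to X_{eq,i}$. Also, $S_i$ does not go extinct, so Lem.~\ref{lemma:ext_Xg} rules out both $X_{i0}<X_e$ and the existence of an extinction cycle. We therefore get $X_{i,c}^0\geq X_e$ for every $c\geq 1$: at $c=1$ directly, and inductively because $X_i(t_c^f)<X_e\leq X_{i,c}^0$ never occurs. Hence $I_e(X_{i,c}^0)=1$ for all $c$, and the end-of-cycle relation Eq.~\eqref{eq:Xf_fun_X0} becomes the exact recursion
\begin{align*}
X_{i,c+1}^0 = X_i(t_c^f) = D X_{i,c}^0\exp\left(\mu_i t_c - k_i(C)t_k\right),\quad c\geq 1.
\end{align*}

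The plan is to solve this recursion for $t_c$. Taking logarithms and using $X_{i,c}^0\geq X_e>0$, so that everything is well defined, gives
\begin{align*}
t_c = \frac{1}{\mu_i}\left(\log\left(\frac{X_{i,c+1}^0}{X_{i,c}^0}\right) + k_i(C)t_k - \log(D)\right) = t_{eq,i} + \frac{1}{\mu_i}\log\left(\frac{X_{i,c+1}^0}{X_{i,c}^0}\right).
\end{align*}
Both $X_{i,c}^0$ and $X_{i,c+1}^0$ converge to the same positive limit $X_{eq,i}$. So the ratio tends to $1$, and by continuity of $\log$ on $(0,\infty)$ the correction term tends to $0$. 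This yields $\lim_{c\to\infty}t_c=t_{eq,i}$.

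The step that needs care is justifying $I_e(X_{i,c}^0)=1$ for all $c$. Without it, the recursion could lose the $\mu_i t_c$ term. One could worry about $S_i$ dropping below $X_e$ at some point while still having a positive limit. This cannot happen: once $X_{i,c}^0<X_e$, growth stops for good, and the limit is zero by the computation in the proof of Lem.~\ref{lemma:ext_Xg}. That contradicts $X_{eq,i}>0$, so I would appeal to that lemma explicitly here. Everything else is a direct limit computation.
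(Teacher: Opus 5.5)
Your proposal is correct and follows the paper's proof. Both use Lem.~\ref{lemma:ext_Xg} to get $I_e(X_{i,c}^0)=1$ for all $c$, and then pass to the limit in Eq.~\eqref{eq:Xf_fun_X0} using $X_{i,c+1}^0=X_i(t_c^f)\to X_{eq,i}>0$. Your explicit rearrangement $t_c=t_{eq,i}+\mu_i^{-1}\log(X_{i,c+1}^0/X_{i,c}^0)$ is slightly cleaner than the paper's step of factoring $X_{eq,i}$ out of the limit, which tacitly assumes that $\lim_{c\to\infty} D\exp(\mu_i t_{c+1}-k_i(C)t_k)$ exists.
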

\begin{proof}

If strain $S_i$  $(1\leq i\leq n)$ survives, using Lem.~\ref{lemma:ext_Xg} we have:
\begin{align*}
    X_i(t_c^f) \geq X_e,\quad c\geq 1,
\end{align*}
and, by Def.~\ref{def:ext}, it holds that:
\begin{align*}
\lim_{c\rightarrow \infty} X_i(t_{c+1}^f)=\lim_{c\rightarrow \infty}X_i\left(t_c^f\right)=X_{eq,i}.
\end{align*}

Then, we can use Eq.~\eqref{eq:Xf_fun_X0} with $I_e(X_{i,c}^0)=1$ for $c\geq 1,$ to get:
\begin{align*}
    X_{eq,i}=&\lim_{c\rightarrow \infty}X_i(t_{c+1}^f)\underbrace{=}_{\text{Eq.~\eqref{eq:Xf_fun_X0}}} \lim_{c\rightarrow \infty}DX_{i,c+1}^0\exp\left(\mu_i t_{c+1} - k_i(C)t_k\right)\underbrace{=}_{\text{Rmk.~\ref{rem:limXf}}}
    \lim_{c\rightarrow \infty}DX_i(t_c^f)\exp\left(\mu_i t_{c+1} - k_i(C)t_k\right)=\\
    &X_{eq,i}\lim_{c\rightarrow \infty}D\exp\left(\mu_i t_{c+1} - k_i(C)t_k\right).
\end{align*}

Finally, dividing both sides of the above equation by $X_{eq,i}>0,$ and operating with the limits, we obtain:
\begin{align*}
\lim_{c\rightarrow \infty}D\exp\left(\mu_i t_{c+1} - k_i(C)t_k\right)=1 \Leftrightarrow \lim_{c\rightarrow \infty} \mu_i t_{c+1} = k_i(C)t_k - \log(D),
\end{align*}
so that, taking into account that the growth rate of strain $S_i$ is distinct from zero $(\mu_i>0),$  we arrive at:
\begin{align*}
   \lim_{c\rightarrow\infty} t_{c+1} =  \lim_{c\rightarrow\infty} t_c = \frac{k_i(C)t_k - \log(D)}{\mu_i}.
\end{align*}
\end{proof}

\begin{corollary}\label{cor:lim_tp_WM}
    If strains $S_i$ and $S_j$ survive cyclic treatment $(1\leq i,j\leq n,$ $j\neq i),$ then:
    \begin{align*}
        t_{eq,i}=t_{eq,j}.
    \end{align*}
\end{corollary}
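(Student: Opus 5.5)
The plan is to apply Lem.~\ref{lemma:lim_tp} to each of the two surviving strains and then use uniqueness of limits. The key point is that the duration $t_c$ of the exponential growth phase at the $c$-cycle does not depend on the strain: by Eq.~\eqref{fun:I_g} it is $t_c=\min\{t_g,t_c^{sat}\}$. Here $t_c^{sat}$ is the saturation time of the whole population, defined by Eq.~\eqref{eq:t_sat}, so the same sequence $\{t_c\mid c\geq 1\}$ enters Eq.~\eqref{eq:Xf_fun_X0} for every strain $S_l$ $(1\leq l\leq n)$.

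First, since $S_i$ survives, Lem.~\ref{lemma:lim_tp} gives $\lim_{c\rightarrow\infty} t_c = t_{eq,i}=\left(k_i(C)t_k-\log(D)\right)/\mu_i$. Second, since $S_j$ also survives, the same lemma applied to $S_j$ gives $\lim_{c\rightarrow\infty} t_c = t_{eq,j}=\left(k_j(C)t_k-\log(D)\right)/\mu_j$, for the very same sequence $\{t_c\}$. Because the limit of a convergent real sequence is unique, we conclude $t_{eq,i}=t_{eq,j}$.

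The only point needing care is justifying that Lem.~\ref{lemma:lim_tp} really speaks about the common, population-level $t_c$ and not a strain-specific quantity such as $t_{i,c}$ from Eq.~\eqref{eq:tsat1_isol}. Inspecting its proof, it uses Eq.~\eqref{eq:Xf_fun_X0}, which involves the shared $t_c$, together with $I_e(X_{i,c}^0)=1$ for all $c$. The latter holds since survival implies, via Lem.~\ref{lemma:ext_Xg}, that the strain never drops below $X_e$. Both strains therefore satisfy the hypotheses of the lemma within the same competitive dynamics, and no further obstacle is expected.
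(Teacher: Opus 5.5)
Your proposal is correct and takes the same route as the paper, whose proof simply says the result ``follows directly from Lem.~\ref{lemma:lim_tp}.'' You apply the lemma to both surviving strains and use uniqueness of the limit of the common, population-level sequence $\{t_c\}$; your remark that $t_c=\min\{t_g,t_c^{sat}\}$ is shared by all strains, and is not the isolation quantity $t_{i,c}$, is exactly the point the paper's one-line proof leaves implicit.
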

\begin{proof}
    The demonstration follows directly from Lem.~\ref{lemma:lim_tp}.
\end{proof}

We are now ready to derive necessary and sufficient conditions on the model parameters for determining which strains survive cyclic treatment.

\begin{prop}[\textbf{Selection conditions}]\label{prop:SC}
    Consider a bacterial population formed by strains $\{S_i\mid 1\leq i\leq n\}$ subjected to cyclic antimicrobial treatment, as described by Eqs.~\eqref{eq:ODEs_simple_model}--\eqref{fun:I_k}, and define:
    \begin{align*}
    J= \left\{1\leq j\leq n\mid X_e\leq X_{j0},\; t_{eq,j} \leq  \min\left\{t_g, t_1 - \frac{1}{\mu_j}\log\left(\frac{X_e}{X_{j0}}\right) \right\}\right\},
    \end{align*}
    as in Prop.~\ref{prop:EC}.2. Then, strain $S_i$ survives if and only if $i\in J,$ and:
     \begin{align*}
         t_{eq,i} = \min\left\{t_{eq,j}\mid j\in J\right\}.
    \end{align*}    
\end{prop}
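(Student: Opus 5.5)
The plan is to prove the two implications separately, reusing the machinery from the proof of Prop.~\ref{prop:EC}.2 together with Lem.~\ref{lemma:lim_tp} and Cor.~\ref{cor:lim_tp_WM}. Throughout, let $t^\ast=\min\{t_{eq,j}\mid j\in J\}$ when $J\neq\varnothing$.

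\emph{Necessity ($\Rightarrow$).} Suppose $S_i$ survives. If $X_{i0}<X_e$, or if $X_i(t_1^f)<X_e$, then Lem.~\ref{lemma:ext_Xg} gives extinction. Hence $X_{i0}\geq X_e$ and $DX_{i0}\exp(\mu_i t_1-k_i(C)t_k)\geq X_e$, which rearranges to $t_{eq,i}\leq t_1-\frac{1}{\mu_i}\log(X_e/X_{i0})$. By Lem.~\ref{lemma:lim_tp}, $t_c\to t_{eq,i}$. Since $t_c\leq t_g$ for all $c$, this gives $t_{eq,i}\leq t_g$, so $i\in J$.

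It remains to show $t_{eq,i}=t^\ast$. Suppose instead that some $j\in J$ has $t_{eq,j}<t_{eq,i}$. I will show that $S_j$ then survives, which contradicts Cor.~\ref{cor:lim_tp_WM}. To do this I reuse the argument of the ``$\Rightarrow$'' part of Prop.~\ref{prop:EC}.2, applied to the strain $j^\ast$ attaining $t^\ast$. Assume $S_{j^\ast}$ went extinct at cycle $c_{e}$. As in that proof, $c_e>1$ and $t_{c_e}<t^\ast\leq t_g$, so the population saturated in cycle $c_e$. Comparing total biomass at the ends of the growth phases of cycles $c_e-1$ and $c_e$ then produces a strain $l$ that is above $X_e$ at cycle $c_e$ and satisfies $t_{eq,l}\leq t_{c_e}<t^\ast$.

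That proof concluded $l\in J$ because $l$ is above $X_e$ at a cycle $c>1$, so $l$ passed cycle $1$ above the extinction limit, and $t_{eq,l}<t_g$. This again contradicts minimality of $t^\ast$. So the strain attaining $t^\ast$ never goes extinct. The final step is to check that its concentration stays bounded below by $X_e$ and converges, so that it survives; the tools here are Eq.~\eqref{eq:Xf_fun_X0} and the boundedness of $X_i\leq K$.

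\emph{Sufficiency ($\Leftarrow$).} Let $i\in J$ with $t_{eq,i}=t^\ast$. By the argument just described, $S_i$ never drops below $X_e$. Every strain $j$ with $X_{j0}<X_e$ or $X_j(t_1^f)<X_e$ dies, by Lem.~\ref{lemma:ext_Xg}. Every strain that is not in $J$, or that lies in $J$ with $t_{eq,j}>t^\ast$, cannot survive together with $S_i$ by Cor.~\ref{cor:lim_tp_WM}, provided $S_i$ survives.

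To show that $S_i$ does survive, I would prove that $t_c\to t^\ast$. Write $X_i(t_{c+1}^f)/X_i(t_c^f)=D\exp(\mu_i t_{c+1}-k_i t_k)=\exp(\mu_i(t_{c+1}-t^\ast))$. Fast strains with $t_{eq,l}<t^\ast$ are already excluded (they would be in $J$ with smaller $t_{eq}$, or they die in cycle 1). So if $t_c>t^\ast$ persisted, all surviving strains would grow cycle over cycle, contradicting the bound by $K$. If $t_c<t^\ast$ persisted, all strains would shrink, forcing $t_c$ up to $t_g\geq t^\ast$.

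This yields a monotone-type convergence of the product $\prod_c\exp(\mu_i(t_c-t^\ast))$ to a positive limit, and hence of $X_i(t_c^f)$ to $X_{eq,i}>0$.

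\emph{Main obstacle.} The hardest step is rigorously establishing convergence of $X_i(t_c^f)$ to a positive limit, rather than merely showing non-extinction. I would attack it by showing that the total population at the end of each saturating cycle is fixed at $DK\sum_l f_l e^{-k_l t_k}$, where $f_l$ are the frequencies, and that the frequencies of strains with $t_{eq,l}>t^\ast$ decay geometrically. That would give convergence of $t_c$ at a summable rate.
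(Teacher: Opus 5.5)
Your route is the paper's. You prove non-extinction of a minimiser of $t_{eq}$ over $J$ by contradiction at its extinction cycle: the population saturates there, and comparing with the end-of-growth biomass of the previous cycle produces $m\in J$ with $t_{eq,m}<t^\ast$. You then prove necessity via Cor.~\ref{cor:lim_tp_WM}. You improve on the paper in two places:
\begin{itemize}
\item You actually derive $i\in J$ from survival, via $X_i(t_1^f)\ge X_e$ and $t_{eq,i}=\lim_c t_c\le t_g$. The paper's appeal to Prop.~\ref{prop:EC} only yields $J\neq\varnothing$.
\item You correctly notice that ``never below $X_e$'' is not yet survival in the sense of Def.~\ref{def:ext}. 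The paper's ``$\Leftarrow$'' stops at non-extinction and never shows that $\lim_c X_i(t_c^f)$ exists.
\end{itemize}

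Your own treatment of that convergence step, however, does not work. The claim that a persistent $t_c>t^\ast$ contradicts the bound by $K$ is false. The regime $t_c\downarrow t^\ast$ from above is the typical one, and in it $X_i(t_c^f)$ increases boundedly, which is exactly the conclusion you want, not a contradiction. Ruling out \emph{persistent} one-sided behaviour also does not exclude oscillation of $t_c$ around $t^\ast$. So no monotonicity of $\prod_c\exp(\mu_i(t_c-t^\ast))$ follows, and the frequency-decay plan is both unproved and unnecessary.

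The missing observation is that the saturation/biomass comparison you already use at the extinction cycle works at \emph{every} cycle $c>1$. Suppose $t_c<t^\ast\le t_g$. Then the population saturates at cycle $c$, and the comparison with cycle $c-1$ yields a strain $m$ with $X_{m,c}^0\ge X_e$ and $t_{eq,m}\le t_c<t^\ast$. Being above $X_e$ at a cycle $c>1$, $m$ passed cycle $1$ above $X_e$, so $m\in J$, contradicting minimality. Hence $t_c\ge t^\ast$ for all $c>1$. This is the argument of Cor.~\ref{cor:t_c=t_g}.2, which uses only minimality over $J$, not survival. For $i\in J$ with $t_{eq,i}=t^\ast$ it gives
\[ X_i(t_c^f)=X_i(t_{c-1}^f)\exp\left(\mu_i(t_c-t^\ast)\right)\ge X_i(t_{c-1}^f)\ge X_e,\qquad c>1, \]
by induction from $X_i(t_1^f)\ge X_e$; the induction also keeps $I_e(X_{i,c}^0)=1$. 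So $X_i(t_c^f)$ is nondecreasing and bounded by $DK$, hence converges to some $X_{eq,i}\ge X_e>0$. This gives non-extinction and survival at once. The limit $t_c\to t^\ast$ then follows from Lem.~\ref{lemma:lim_tp}, rather than being needed as input as in your plan. With the minimiser's survival established this way, your necessity argument via Cor.~\ref{cor:lim_tp_WM} goes through as written.

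One caveat affects both your comparison step and the paper's. They need the end-of-growth biomass of cycle $c-1$ to be at most $K$, which for $c=2$ means $\sum_l X_{l0}\le K$. The paper assumes this tacitly, and without it the statement itself can fail.
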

\begin{proof}
   \textcolor{white}{b}\\[0.2cm]
        $``\Leftarrow"$ Let us assume that strain $S_i$ $(1\leq i\leq n)$ verifies $i\in J$ and $t_{eq,i}\leq t_{eq,j}$ for $j\in J.$ Then, if strain $S_i$ goes extinct under cyclic treatment, by Lem.~\ref{lemma:ext_Xg} there exist a cycle $c_{e,i}$ such that:
        \begin{align*}
            X_i(t_{c_{e,i}}^f) < X_e \leq X_i(t_{c_{e,i}}^0)\underbrace{\Rightarrow}_{Eq.~\eqref{eq:Xf_fun_X0}} &\mu_i t_{c_{e,i}} - k_it_k + \log(D) < 0 \Rightarrow \\
            &t_{c_{e,i}} < t_{eq,i}= \frac{k_it_k - \log(D)}{\mu_i} \leq t_{eq,j},\quad j\in J,
        \end{align*}
        and $c_{e,i}>1,$ as $i\in J.$ Additionally, $i\in J$ implies $t_{eq,i}\leq t_g$ so that, since $S_i$ goes extinct, we obtain:
        \begin{align*}
         t_{c_{e,i}} < t_{eq,i}\leq t_g,
        \end{align*}
       and the population reaches the saturation concentration at the extinction cycle of strain $S_i.$ That is:
        \begin{align*}
          t_{c_{e,i}}=t_{c_{e,i}}^{sat} < t_g \Rightarrow  K = \sum_{l=1}^n X_l(t_{c_{e,i}}^0)\exp\left(\mu_l t_{c_{e,i}}I_e(X_{l,c_{e,i}}^0)\right).
        \end{align*}
        
        Then, the total cell concentration at the end of the growth period $c_{e,i}-1\geq 1$ verifies:
        \begin{align*}
            \sum_{l=1}^n X_l(t_{c_{e,i}-1}^0)\exp\left(\mu_l t_{c_{e,i}-1}I_e(X_{l,c_{e,i}-1}^0)\right) \leq  K = \sum_{l=1}^n X_l(t_{c_{e,i}}^0)\exp\left(\mu_l t_{c_{e,i}}I_e(X_{l,c_{e,i}}^0)\right),
        \end{align*}
        so that:
        \begin{align*}
            \sum_{l=1}^n X_l(t_{c_{e,i}}^0)\frac{\exp(k_lt_k)}{D}\underbrace{=}_{\text{Eq.~\eqref{eq:Xf_fun_X0}}}&\sum_{l=1}^n X_l(t_{c_{e,i}-1}^0)\exp\left(\mu_l t_{c_{e,i}-1}I_e(X_{l,c_{e,i}-1}^0)\right) \leq \\
            &\sum_{l=1}^n X_l(t_{c_{e,i}}^0)\exp\left(\mu_l t_{c_{e,i}}I_e(X_{l,c_{e,i}}^0)\right)\Rightarrow \\
            &0\leq \sum_{l=1}^n X_l(t_{c_{e,i}}^0)\left(\exp\left(\mu_l t_{c_{e,i}} I_e(X_{l,{c_{e,i}}}^0)\right) - \frac{\exp(k_l t_k)}{D}\right).
         \end{align*}
         
         However, if the above condition holds, there exists a strain $S_m$ $(1\leq m\leq n)$ such that:
        \begin{align*}
        \frac{\exp(k_m t_k)}{D}\leq \exp\left(\mu_m t_{c_{e,i}}I_e(X_{m,c_{e,i}}^0)\right) \Rightarrow I_e(X_{m,c_{e,i}}^0)=1,\quad t_{eq,m}\leq t_{c_{e,i}}<t_g,
        \end{align*}
         and it should be $m\in J,$ since $S_m$ is above the extinction limit at the start of cycle $c_{e,i}>1,$ and $t_{eq,m}<t_g.$ We thus arrive at a contradiction, because:
        \begin{align*}
            t_{eq,m}\leq t_{c_{e,i}}<t_{eq,i}\leq t_{eq,j},\quad m,j\in J.
        \end{align*}
        
        $``\Rightarrow"$ Let us assume that strain $S_i$ $(1\leq i\leq n)$ survives under cyclic treatment. Then, the total population cannot become extinct and, by Prop.~\ref{prop:EC}, it should be $i\in J\neq \varnothing.$ Now, if there exist a strain $S_l$ with $l\in J,$ such that:
        \begin{align*}
            t_{eq,l}=\min\{t_{eq,j}\mid j\in J\} < t_{eq,i},
        \end{align*}
         we just demonstrated that $S_l$ survives under cyclic treatment. However, taking into account Cor.~\ref{cor:lim_tp_WM}, we would have:
        \begin{align*}
        t_{eq,i} = t_{eq,l}< t_{eq,i},
        \end{align*}
        which is impossible. Therefore, if strain $S_i$ survives, it should be  $t_{eq,i}\leq t_{eq,j}$ for any $j\in J.$
\end{proof}

When an ancestor bacterial strain competes under cyclic treatment with a resistant or tolerant mutant, the selection condition in Prop.~\ref{prop:SC} can be represented as a trade-off between survival advantage $(\text{SA})$ and fitness cost $(\text{FC})$ conferred by the mutation. Indeed, define the fitness cost of the mutant $S_j$ as the reduction in growth rate relative to the ancestor $S_i$ $(1\leq i,j\leq n,$  $j\neq i),$ that is:
\begin{align*}
    \text{FC}_j = \frac{\mu_i - \mu_j}{\mu_i} = 1 - \frac{\mu_j}{\mu_i},
\end{align*}
and the survival advantage as the increase in cell survival after killing and dilution of the mutant relative to the ancestor, so that:
\begin{align*}
    \text{SA}_j = \frac{\log\left(X_i(t_c^f)/X_i(t_c^f-t_k)\right) - \log\left(X_j(t_c^f)/X_j(t_c^f-t_k)\right)}{\log(X_i(t_c^f)/X_i(t_c^f-t_k))} = 1 - \frac{k_jt_k - \log(D)}{k_i t_k - \log(D)}.
\end{align*}

Then, the condition for selective equilibrium between ancestor $S_i$ and mutant $S_j$ given in Prop.~\ref{prop:SC} takes the form:
\begin{align*}
   t_{eq,i} = t_{eq,j}\Leftrightarrow \text{FC}_j = \text{SA}_j.
\end{align*}

The above result indicates that mutant strains $S_j$ that cannot compensate for the fitness cost by increasing cell survival under antimicrobial killing (i.e., if $\text{FC}_j> \text{SA}_j)$ will be outcompeted by the ancestor. Conversely, the mutant will eventually outcompete the ancestor when the survival advantage conferred  by the mutation exceeds the fitness cost (i.e.,  $\text{FC}_j< \text{SA}_j).$ That is true as long as the mutant $S_j$  is not eliminated by competition with the remaining strains during the first cycle of treatment (see Prop.~\ref{prop:EC}.2).

Note that the selection condition in Prop.~\ref{prop:SC} only informs about which strain is eventually selected under the cyclic protocol, but does not provide any measure of the selective pressure exerted by one strain over another. To do that, we can use the following definition, which provides a quantitative measure of the selective advantage or disadvantage of one strain over the others.


\begin{definition}
Consider a bacterial population formed by strains $\{S_i\mid 1\leq i\leq n\}$ subjected to cyclic antimicrobial treatment. The selection coefficient of strain $S_i$ over $S_j$ at the $c$-cycle is~\cite{2011Chevin,2020Lin}:
\begin{align}\label{eq:sel_coeff}
    s_{i\mid j,c} = \log\left(\frac{X_i(t_c^f)}{X_i(t_c^0)}\right) -  \log\left(\frac{X_j(t_c^f)}{X_j(t_c^0)}\right),\quad c\geq 1,\quad 1\leq i,j\leq n, \quad i\neq j.
\end{align}
\end{definition}

\begin{remark} The selection coefficient verifies:
   \begin{align*}
   s_{i \mid j,c} = - \left( \log\left(\frac{X_j(t_c^f)}{X_j(t_c^0)}\right) -  \log\left(\frac{X_i(t_c^f)}{X_i(t_c^0)}\right)\right) =  - s_{j \mid i,c},\quad c\geq 1,\quad 1\leq i,j\leq n, \quad i\neq j.
\end{align*}
\end{remark}

The selection coefficient in Eq.~\eqref{eq:sel_coeff} thus measures, at any cycle of treatment, the relative increase or decrease in cell concentration of one strain compared to another. Then, if the cell concentration of strain $S_i$ increased more than for $S_j$ from the start to the end of the $c$-cycle, the selection coefficient $s_{i\mid j,c}$ is positive  and increases with the difference in survival between the strains. Additionally, if the cell concentration of both strains decreases during the $c$-cycle, the selection coefficient $s_{i\mid j,c}$ is equally positive when strain $S_j$ is being eliminated by cyclic treatment faster than  $S_i.$ In consequence,  the selection coefficient can compare the selective ability between any pair of competing strains, but it cannot detect whether both strains go extinct.

For the model of cyclic antimicrobial treatment in Eqs.~\eqref{eq:ODEs_simple_model}--\eqref{fun:I_k}, the selection coefficient of Eq.~\eqref{eq:sel_coeff} takes the form:
\begin{align}\label{eq:sel_coeff_sm}\nonumber
    s_{i\mid j,c} = & \log\left(\frac{X_i(t_c^f)}{X_i(t_c^0)}\right) -  \log\left(\frac{X_j(t_c^f)}{X_j(t_c^0)}\right) \underbrace{=}_{\text{Eq.~\eqref{eq:Xf_fun_X0}}}\\ \nonumber &\log\left(\frac{DX_{i,c}^0\exp(\mu_{i}t_c I_e(X_{i,c}^0) - k_{i}(C)t_k)}{X_{i,c}^0}\right) - \log\left(\frac{DX_{j,c}^0\exp(\mu_{j}t_c I_e(X_{j,c}^0) - k_{j}(C)t_k)}{X_{j,c}^0}\right)=\\[0.1cm]
    &\left(\mu_{i}I_e(X_{i,c}^0) - \mu_{j}I_e(X_{j,c}^0)\right)t_c - \left(k_{i}(C) - k_{j}(C)\right)t_k,\quad c\geq 1,\quad 1\leq i,j\leq n, \quad i\neq j.
\end{align}

\subsection{Extinction cycle}

Taking into account the extinction and selection conditions provided in Props.~\ref{prop:EC}--\ref{prop:SC}, we can now derive bounds on the extinction cycle for the strains that are eventually eliminated by cyclic treatment. These bounds can be used to analyse the effect of the setup parameters and bacterial traits on the extinction dynamics, or to determine the number of cycles required to eliminate the strains without running model simulations.

The following consequence of Prop.~\ref{prop:EC}.2 will be useful in the derivations that follow.
\begin{corollary}\label{cor:t_c=t_g}
    Consider a bacterial population formed by strains $\{S_i\mid 1\leq i\leq n\}$ subjected to cyclic antimicrobial treatment, as described by Eqs.~\eqref{eq:ODEs_simple_model}--\eqref{fun:I_k}. It holds that:
    \begin{enumerate}
        \item[1)]  If the total population goes extinct, the saturation cannot be reached from the first cycle; that is:
                    \begin{align*}
                        t_c = t_g,\quad c>1.
                    \end{align*}
        \item[2)]  If strain $S_i$ survives, the equilibrium time  of strain $S_i$ verifies:
                    \begin{align*}
                     t_{eq,i}\leq t_c,\quad c>1.
                    \end{align*}
    \end{enumerate}
\end{corollary}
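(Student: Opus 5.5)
The two parts rest on a common mechanism. Saturation can be reached at some cycle $c>1$ only if some strain is above the extinction limit at the start of cycle $c$ and has a sufficiently small equilibrium time. That strain must then belong to the set $J$ of Prop.~\ref{prop:EC}.2. I would first isolate this mechanism as an auxiliary claim. Then I would deduce 1) from $J=\varnothing$ and 2) from the minimality of $t_{eq,i}$ in Prop.~\ref{prop:SC}.

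\emph{Auxiliary claim.} Suppose that $t_c<t_g$ for some $c>1$, so that $t_c=t_c^{sat}$. Then there is a strain $S_m$ with $m\in J$ and $t_{eq,m}\leq t_c$.

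To prove it I would reuse the argument already written in the proofs of Props.~\ref{prop:EC}.2 and \ref{prop:SC}. At the end of the growth period of cycle $c-1$ the total concentration is at most $K$. At the end of the growth period of cycle $c$ it equals $K$. Writing the former in terms of $X_{l,c}^0$ via Eq.~\eqref{eq:Xf_fun_X0} gives
\begin{align*}
0\leq \sum_{l=1}^n X_{l,c}^0\left(\exp\left(\mu_l t_c I_e(X_{l,c}^0)\right)-\frac{\exp(k_l(C)t_k)}{D}\right).
\end{align*}
Hence some $m$ with $X_{m,c}^0>0$ satisfies $\exp(k_m(C)t_k)/D\leq \exp(\mu_m t_c I_e(X_{m,c}^0))$. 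Since $D<1$ and $k_m(C)\ge 0$, this forces $I_e(X_{m,c}^0)=1$ and $t_{eq,m}\leq t_c<t_g$.

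It remains to check that $m\in J$. Because $X_{m,c}^0\geq X_e$ with $c>1$, strain $S_m$ must have grown at cycle $1$, so $X_{m0}\geq X_e$. It must also have stayed above $X_e$ afterwards: once a strain is below $X_e$ it only decays, so $X_m(t_1^f)\geq X_e$. Using Eq.~\eqref{eq:Xf}, $X_m(t_1^f)=DX_{m0}\exp(\mu_m t_1-k_m(C)t_k)\geq X_e$ is equivalent to $t_{eq,m}\leq t_1-\frac{1}{\mu_m}\log(X_e/X_{m0})$. Together with $t_{eq,m}<t_g$ this gives $m\in J$.

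\emph{Part 1).} If the total population goes extinct, Prop.~\ref{prop:EC}.2 gives $J=\varnothing$. The auxiliary claim then rules out $t_c<t_g$ for every $c>1$, so $t_c=t_g$. The degenerate case in which no strain is above $X_e$ at the start of cycle $c$ needs separate handling. There Eq.~\eqref{eq:t_sat} has no solution (Rmk.~\ref{rmk:sol_tsat}), and I would adopt the convention $t_c=t_g$, consistent with $I_g$; in any case the growth time is irrelevant there because $I_e\equiv 0$.

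\emph{Part 2).} Suppose $S_i$ survives but $t_c<t_{eq,i}$ for some $c>1$. By Prop.~\ref{prop:SC}, $i\in J$, so $t_{eq,i}\leq t_g$ and therefore $t_c<t_g$. The auxiliary claim produces $m\in J$ with $t_{eq,m}\leq t_c<t_{eq,i}$. This contradicts $t_{eq,i}=\min\{t_{eq,j}\mid j\in J\}$ from Prop.~\ref{prop:SC}.

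The main obstacle is the membership $m\in J$: I must justify that being above $X_e$ at the start of a later cycle forces $X_m(t_1^f)\geq X_e$. This is the monotonicity fact that below $X_e$ a strain never grows. I would state it explicitly, appealing to Eq.~\eqref{eq:Xf_fun_X0} with $I_e=0$, rather than leave it implicit as the earlier proofs do.
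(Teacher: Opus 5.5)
Your proposal is correct and follows essentially the same route as the paper: at a saturated cycle $c>1$ you compare the total concentration at the end of growth in cycles $c-1$ and $c$ via Eq.~\eqref{eq:Xf_fun_X0} to extract a strain $S_m$ with $I_e(X_{m,c}^0)=1$ and $t_{eq,m}\leq t_c<t_g$. You then contradict $J=\varnothing$ (part 1) or the minimality of $t_{eq,i}$ over $J$ (part 2), exactly as the paper does. The only difference is packaging: you prove $m\in J$ directly, whereas the paper argues contrapositively that strains outside $J$ have $c_{e,l}\leq 1$ or $t_{eq,l}>t_g$, and you also make explicit the positivity of $X_{m,c}^0$, the no-growth monotonicity below $X_e$, and the degenerate case where Eq.~\eqref{eq:t_sat} has no solution, all of which the paper leaves implicit.
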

\begin{proof}
\textcolor{white}{b}\\
\vspace{-0.3cm}
    \begin{enumerate}
        \item[1)] If the total population goes extinct under cyclic treatment, by Prop.~\ref{prop:EC}.2, we have that any strain $S_i$ $(1\leq i\leq n)$ goes extinct in isolation, or the extinction cycle verifies $c_{e,i}=1.$ Then, $t_{eq,i}>t_g$ for any strain $S_i$ such that $c_{e,i}>1.$ However, if the population reaches the saturation concentration at some cycle $c>1,$ we can follow a similar reasoning to that of Prop.~\ref{prop:EC}.2, and obtain:
        \begin{align*}
            \sum_{l=1}^n X_{l,c}^0&\exp\left(\mu_lt_cI_e(X_{l,c}^0)\right)\underbrace{=}_{\text{Eq.~\eqref{eq:t_sat}}} K \geq \sum_{l=1}^n X_{l,c-1}^0\exp\left(\mu_lt_{c-1}I_e(X_{l,c-1}^0\right) \underbrace{=}_{\text{Eq.~\eqref{eq:Xf_fun_X0}}}   \sum_{l=1}^n X_{l,c}^0\frac{\exp(k_l(C)t_k)}{D}\Rightarrow\\
            &\sum_{l=1}^n X_{l,c}^0\left(\exp\left(\mu_l t_c I_e(X_{l,c}^0)\right)-\frac{\exp(k_l(C)t_k)}{D}\right)\geq 0\Rightarrow  I_e(X_{m,c}^0)=1,\;\: t_{eq,m} \leq t_c,
         \end{align*}
        for some strain $S_m$ $(1\leq m\leq n).$ We thus arrive at a contradiction, since  $t_c < t_g < t_{eq,m}.$
        
        \item[2)] If strain $S_i$ survives cyclic treatment, using Prop.~\ref{prop:SC}, it holds that $i\in J$ and:
        \begin{align*}
            t_{eq,i} = \min\{t_{eq,j}\mid j\in J\},\quad J=\left\{1\leq j\leq n\mid X_e\leq X_{j0},\; t_{eq,j}\leq \min\left\{t_g,t_1 - \frac{1}{\mu_j}\log\left(\frac{X_e}{X_{j0}}\right)\right\}\right\}.
        \end{align*}
         
         Then, if there exists a cycle $c> 1$ such that $t_c<t_{eq,i},$ we have $t_c<t_{eq,i}\leq t_{eq,j}\leq t_g$
        for any $j\in J,$ and the population should reach the carrying capacity at the $c$-cycle. Also, strains $S_l$ $(1\leq l\leq n)$ such that $l\notin J$ should verify $c_{e,l}=1$ or $t_c<t_g<t_{eq,l},$ by Prop.~\ref{prop:EC}.2. We thus arrive at:
        \begin{align*}
          \sum_{l=1}^n X_{l,c}^0&\exp\left(\mu_lt_cI_e(X_{l,c}^0)\right)\underbrace{=}_{\text{Eq.~\eqref{eq:t_sat}}} K \geq \sum_{l=1}^n X_{l,c-1}^0\exp\left(\mu_lt_{c-1}I_e(X_{l,c-1}^0\right) \underbrace{=}_{\text{Eq.~\eqref{eq:Xf_fun_X0}}}   \sum_{l=1}^n X_{l,c}^0\frac{\exp(k_l(C)t_k)}{D}\Rightarrow\\
            &\sum_{l=1}^n X_{l,c}^0\left(\exp\left(\mu_l t_c I_e(X_{l,c}^0)\right)-\frac{\exp(k_l(C)t_k)}{D}\right)\geq 0\Rightarrow  I_e(X_{m,c}^0)=1,\;\: t_{eq,m} \leq t_c, 
        \end{align*}
         for some strain $S_m$ $(1\leq m\leq n),$ which is impossible.
    \end{enumerate}
\end{proof}

When the bacterial population goes extinct under cyclic treatment, we have demonstrated in Prop.~\ref{prop:EC}.2 that the strains surviving in isolation should fall below the extinction limit at the end of the first cycle due to competition. Additionally,   Cor.~\ref{cor:t_c=t_g} states that the population cannot reach the saturation concentration from the first cycle onwards, so that the strains surviving after the first cycle evolve as if they were in isolation. We thus have the following result.

\begin{corollary}\label{cor:ext_cyc_totalext}
    Consider a bacterial population formed by strains $\{S_i\mid 1\leq i\leq n\}$ subjected to cyclic antimicrobial treatment, as described by Eqs.~\eqref{eq:ODEs_simple_model}--\eqref{fun:I_k}. If the total population goes extinct, the extinction cycle of strain $S_i$ is:
    \begin{align}
        c_{e,i}=\begin{cases}
        0,& X_{i0} < X_e,\\
            1,&   X_{i0} \geq X_e,\;t_{eq,i}> t_1 - \log\left(X_e/X_{i0}\right)/\mu_i,\\
             \ceil*{\left(t_1 - \log\left(X_e/X_{i0}\right)/\mu_i - t_g\right)/\left(t_{eq,i} - t_g\right)},& \text{i.o.c.,}
        \end{cases}
    \end{align} 
    where $y=\ceil*{x}$ denotes the ceiling function, i.e., $y$ is the minimum integer greater than or equal to $x.$
\end{corollary}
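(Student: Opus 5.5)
The plan is to split according to the three cases in the statement and use the explicit closed form in Eq.~\eqref{eq:Xf} together with Cor.~\ref{cor:t_c=t_g}.1, which gives $t_c=t_g$ for all $c>1$ when the total population goes extinct.

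The first case, $X_{i0}<X_e$, is exactly the convention $c_{e,i}=0$ adopted in the proof of Lem.~\ref{lemma:ext_Xg}, so nothing needs to be shown. For the second case, take $X_{i0}\geq X_e$. By Eq.~\eqref{eq:Xf} with $c=1$ we have $X_i(t_1^f)=DX_{i0}\exp(\mu_it_1-k_i(C)t_k)$. Taking logarithms, the inequality $X_i(t_1^f)<X_e$ is equivalent to
\begin{align*}
t_{eq,i}> t_1-\frac{1}{\mu_i}\log\left(\frac{X_e}{X_{i0}}\right),
\end{align*}
since $\mu_it_{eq,i}=k_i(C)t_k-\log(D)$. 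Because $X_e\leq X_{i0}=X_i(t_1^0)$, this is precisely the condition $c_{e,i}=1$ in Lem.~\ref{lemma:ext_Xg}.

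The third case is $X_{i0}\geq X_e$ with $t_{eq,i}\leq t_1-\log(X_e/X_{i0})/\mu_i$. Here $S_i$ is above $X_e$ after cycle~1. Since the population goes extinct, Prop.~\ref{prop:EC}.2 says $i\notin J$. From the definition of $J$, and because the second clause of the minimum fails, this forces $t_{eq,i}>t_g$. I would then prove by induction on $c$ that, as long as $X_{i,q}^0\geq X_e$ for all $q\leq c$,
\begin{align*}
\log X_i(t_c^f)=\log X_{i0}+\mu_it_1+(c-1)\mu_it_g-c\,\mu_it_{eq,i},
\end{align*}
which uses $t_q=t_g$ for $q>1$ (Cor.~\ref{cor:t_c=t_g}.1) and $I_e=1$. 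This expression is strictly decreasing in $c$ because $t_{eq,i}>t_g$. Hence the extinction cycle is the first $c$ with $X_i(t_c^f)<X_e$: monotonicity guarantees that every earlier end-of-cycle value is at least $X_e$, so $X_i(t_{c}^0)=X_i(t_{c-1}^f)\geq X_e$, as required by Lem.~\ref{lemma:ext_Xg}. One point to handle is that $X_{i,c}^0$ equals $X_i(t_{c-1}^f)$ by Rmk.~\ref{rem:limXf}, the dilution already being included.

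Rearranging $\log X_i(t_c^f)<\log X_e$ gives
\begin{align*}
c\,(t_{eq,i}-t_g)>t_1-\frac{1}{\mu_i}\log\left(\frac{X_e}{X_{i0}}\right)-t_g .
\end{align*}
The main obstacle is reconciling this strict inequality with the ceiling in the stated formula. The smallest integer satisfying the strict inequality is $\lfloor x\rfloor+1$, where $x$ is the ratio, and this agrees with $\ceil*{x}$ except when $x$ is an integer. I would first check whether the boundary case can be absorbed, for instance by a convention at equality or by showing that equality is non-generic. Otherwise I would note that the formula holds up to this measure-zero tie, or state it with $\lfloor x\rfloor+1$.

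Finally, the condition of case three together with $t_{eq,i}>t_g$ gives $x\geq 1$, so $c_{e,i}\geq 1$. This is consistent with the second case not applying.
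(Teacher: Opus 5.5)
Your proposal is correct and follows the paper's proof essentially step for step. It uses the same three cases, obtains $t_{eq,i}>t_g$ from $J=\varnothing$ in Prop.~\ref{prop:EC}.2, takes $t_c=t_g$ for $c>1$ from Cor.~\ref{cor:t_c=t_g}.1, and reads off the strict inequality $c\,(t_{eq,i}-t_g)>t_1-\log(X_e/X_{i0})/\mu_i-t_g$ from Eq.~\eqref{eq:Xf}.

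Your worry about the ceiling is justified, and it is a flaw in the stated formula rather than in your argument. The paper simply asserts $\min\{c>1\mid c>x\}=\lceil x\rceil$, which fails when $x$ is an integer. For example, $x=1$ (i.e.\ $t_{eq,i}=t_1-\log(X_e/X_{i0})/\mu_i$) gives $X_i(t_1^f)=X_e$, so the true extinction cycle is $2$, not $1$. The correct value is therefore $\lfloor x\rfloor+1$. With it, your closing check gives $c_{e,i}\geq 2$, which is what the third case actually requires, rather than only $c_{e,i}\geq 1$.
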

\begin{proof}
    If the total population goes extinct, from Prop.~\ref{prop:EC}.2, we have $X_{i0}<X_e$ for any strain $S_i$ $(1\leq i\leq n),$ or:
    \begin{align*}
        t_{eq,i} > \min\left\{t_g, t_1 - \frac{1}{\mu_i}\log\left(\frac{X_e}{X_{i0}}\right)\right\}.
    \end{align*}

    If $X_{i0}<X_e,$ we can set the extinction cycle to $c_{e,i}=0,$ since the cell concentration of strain $S_i$ is already below the extinction limit before the treatment starts. On the other hand, Prop.~\ref{prop:EC}.2 states that $c_{e,i}=1$ when $X_{i0}\geq X_{e},$ and:
    \begin{align*}
    t_{eq,i} > t_1 - \frac{1}{\mu_i}\log\left(\frac{X_e}{X_{i0}}\right).
    \end{align*}

    Then, the unique non-trivial case is when $X_{i0}\geq X_e,$ and we have:
    \begin{align*}
        t_g < t_{eq,i} \leq t_1 - \frac{1}{\mu_i}\log\left(\frac{X_e}{X_{i0}}\right),
    \end{align*}    
    so that the extinction cycle of strain $S_i$ verifies $c_{e,i}>1.$ Then, since the duration of the exponential growth phase is $t_c=t_g$ for any cycle $c>1,$ by Cor.~\ref{cor:t_c=t_g}, and taking into account  Lem.~\ref{lemma:ext_Xg}, it holds that:
    \begin{align*}
      X_i(t_{c_{e,i}}^f) < X_e,\;\: I_e(X_{i,c}^0)=1,&\;\: t_c=t_g,\;\: 1<c\leq c_{e,i}\Rightarrow\\
      &X_i(t_c^f)\underbrace{=}_{\text{Eq.~\eqref{eq:Xf}}} D^{c_{e,i}}X_{i0}\exp\left(\mu_it_1 + \mu_i\left(c_{e,i}-1\right)t_g - c_{e,i}k_i(C)t_k\right) < X_e \Leftrightarrow\\
      &c_{e,i}\left(\log(D) + \mu_i t_g - k_i(C)t_k\right) < \log\left(\frac{X_e}{X_{i0}}\right) + \mu_i\left(t_g - t_1\right)\Leftrightarrow\\
      &c_{e,i}\left(\mu_it_g - \mu_i t_{eq,i}\right) < \log\left(\frac{X_e}{X_{i0}}\right) + \mu_i\left(t_g - t_1\right)\Leftrightarrow\\
      &c_{e,i}\left(t_g - t_{eq,i}\right) < t_g - \left(t_1 - \frac{1}{\mu_i}\log\left(\frac{X_e}{X_{i0}}\right)\right).
    \end{align*}

    Since strain $S_i$ verifies $t_g - t_{eq,i}<0,$ the above condition is equivalent to:
    \begin{align*}
      X_i(t_{c_{e,i}}^f) < X_e,\;\: I_e(X_{i,c}^0)=1,\;\:& t_c=t_g,\;\: 1<c\leq c_{e,i}\Leftrightarrow \\
      &I_e(X_{i,c}^0)=1,\;\: t_c=t_g,\;\: 1<c\leq c_{e,i},\;\: c_{e,i} >  \frac{t_1 - \log\left(X_e/X_{i0}\right)/\mu_i - t_g}{t_{eq,i} - t_g},
    \end{align*}
    which leads to the desired expression for the extinction cycle of strain $S_i:$
    \begin{align*}
         X_{i0}\geq X_e,\;\:t_g < t_{eq,i} \leq\; & t_1 - \frac{1}{\mu_i}\log\left(\frac{X_e}{X_{j0}}\right)\Rightarrow\\
         &c_{e,i} =\min\left\{c>1\mid c > \frac{t_1 - \log\left(X_e/X_{i0}\right)/\mu_i - t_g}{t_{eq,i} - t_g}\right\}= \ceil*{  \frac{t_1 - \log\left(X_e/X_{i0}\right)/\mu_i - t_g}{t_{eq,i} - t_g}}.
    \end{align*}  
    
\end{proof}

Let us now study the extinction cycle when the competition between strains extends beyond the first cycle of the antimicrobial treatment. In that case, the extinction dynamics of the strains depend on their ability to survive  cyclic treatment and outcompete the others. A closed-form expression for the extinction cycle, however, cannot be obtained when at least two strains are competing from cycle $c>1,$ since there is no closed-form expression for the saturation times in Eq.~\eqref{eq:t_sat}. Nevertheless, we can use the following result to bound the extinction cycle when at least one strain is eventually selected under cyclic treatment.
\begin{corollary}\label{cor:ext_cycle}
    Consider a bacterial population formed by strains $\{S_i\mid 1\leq i\leq n\}$ subjected to cyclic antimicrobial treatment, as described by Eqs.~\eqref{eq:ODEs_simple_model}--\eqref{fun:I_k}. If the total population survives, so that:
        \begin{align*}
        J= \left\{1\leq j\leq n\mid X_e\leq X_{j0},\; t_{eq,j} \leq  \min\left\{t_g, t_1 - \frac{1}{\mu_j}\log\left(\frac{X_e}{X_{j0}}\right) \right\}\right\}\neq \varnothing,
    \end{align*}
    then, it holds that:
    \begin{enumerate}
        \item[1)] If $t_g=\min\{t_{eq,j}\mid j\in J\},$ the extinction cycle for any strain $S_i$ going extinct is:
            \begin{align*}
        c_{e,i}=\begin{cases}
            0, & X_{i0} < X_e,\\
            1 ,&  X_{i0} \geq X_e,\;t_{eq,i} > t_1 - \log\left(X_e/X_{i0}\right)/\mu_i,\\
         \ceil*{\left(t_1 - \log\left(X_e/X_{i0}\right)/\mu_i - t_g\right)/\left(t_{eq,i} - t_g\right)},& \text{i.o.c.}
        \end{cases}
    \end{align*}    
        \item[2)] If $t_g>\min\{t_{eq,j}\mid j\in J\},$ the extinction cycle for any strain $S_i$ going extinct is bounded as:
        \begin{align*}
            L_{e,i}(l) < c_{e,i},
        \end{align*}
        for any strain $S_l$ $(1\leq l\leq n)$ such that $t_{eq,l} = \min\{t_{eq,j}\mid j\in J\},$ where:
        \begin{align*}
        L_{e,i}(l) &= \begin{cases}
            0,& X_{i0} < X_e,\\
            1,&   X_{i0} \geq X_e,\;t_{eq,i}> t_1 - \log\left(X_e/X_{i0}\right)/\mu_i,\\
            \dfrac{\log\left(X_{i0}/X_e\right)/\mu_i + \log\left(K/X_{l0}\right)/\mu_l - t_{eq,l}}{t_{eq,i} - t_{eq,l}},& \text{i.o.c.}
        \end{cases}
        \end{align*}      
    \end{enumerate}
\end{corollary}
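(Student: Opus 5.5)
The plan is to split according to the two items. In both, the extinction cycle $c_{e,i}$ is characterised through Lem.~\ref{lemma:ext_Xg} and the closed form in Eq.~\eqref{eq:Xf}. The trivial cases $c_{e,i}=0$ (when $X_{i0}<X_e$) and $c_{e,i}=1$ (when $t_{eq,i}>t_1-\log(X_e/X_{i0})/\mu_i$) are handled exactly as in Cor.~\ref{cor:ext_cyc_totalext}. Indeed, $X_i(t_1^f)=DX_{i0}\exp(\mu_i t_1-k_i(C)t_k)<X_e$ is equivalent to that inequality. For item 2 we also have $L_{e,i}(l)\in\{0,1\}$ in these cases, and the strict inequality $L_{e,i}(l)<c_{e,i}$ then has to be read with the convention already used in the paper for these degenerate cases. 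So the real work is the non-trivial case $X_{i0}\geq X_e$, $c_{e,i}>1$.

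\textbf{Item 1.} If $t_g=\min\{t_{eq,j}\mid j\in J\}=t_{eq,l}$, Cor.~\ref{cor:t_c=t_g}.2 gives $t_c\geq t_{eq,l}=t_g$ for $c>1$. Since always $t_c\leq t_g$, this forces $t_c=t_g$ for every $c>1$. From there the computation of Cor.~\ref{cor:ext_cyc_totalext} goes through verbatim, with $t_g-t_{eq,i}<0$. This holds because $i$ goes extinct, so either $i\notin J$ (and Prop.~\ref{prop:EC}.2 gives $t_{eq,i}>t_g$), or $i\in J$ with $t_{eq,i}>t_{eq,l}=t_g$ by Prop.~\ref{prop:SC}. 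The result is the same ceiling expression.

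\textbf{Item 2.} The aim is a lower bound on $c_{e,i}$, so I need an inequality that holds \emph{at} the extinction cycle and is violated for small $c$. By Lem.~\ref{lemma:ext_Xg}, $S_i$ grows in every cycle $q\leq c_{e,i}$, so Eq.~\eqref{eq:Xf} and $X_i(t^f_{c_{e,i}})<X_e$ give
\begin{align*}
\sum_{q=1}^{c_{e,i}} t_q < c_{e,i}\,t_{eq,i}-\frac{1}{\mu_i}\log\left(\frac{X_{i0}}{X_e}\right).
\end{align*}
Now I need a \emph{lower} bound on $\sum_q t_q$ in terms of $l$.

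For $q>1$, Cor.~\ref{cor:t_c=t_g}.2 gives $t_q\geq t_{eq,l}$, hence $\sum_{q=1}^{c}t_q\geq t_1+(c-1)t_{eq,l}$. Substituting yields
\begin{align*}
c_{e,i}\,(t_{eq,i}-t_{eq,l}) > \frac{1}{\mu_i}\log\left(\frac{X_{i0}}{X_e}\right)+t_1-t_{eq,l}.
\end{align*}
Dividing by $t_{eq,i}-t_{eq,l}>0$ gives a bound of the stated type, with $t_1$ in place of $\log(K/X_{l0})/\mu_l$. Positivity of the denominator holds either via Prop.~\ref{prop:SC} when $i\in J$, or via $t_{eq,i}>t_g>t_{eq,l}$ when $i\notin J$.

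\textbf{Main obstacle.} The remaining step is to replace $t_1$ by $\log(K/X_{l0})/\mu_l$. By Lem.~\ref{lemma:tsat_isol} we only have $t_1\leq t^{sat}_{l,1}=\log(K/X_{l0})/\mu_l$, and that is the wrong direction for a lower bound on $c_{e,i}$. I would first try to compensate using the trajectory of the surviving strain $S_l$ rather than $t_1$ alone. Concretely, I would pair the cumulative constraint $X_l(t^0_c+t_c)\leq K$, i.e.
\begin{align*}
\mu_l\sum_{q\leq c}t_q\leq\log\left(\frac{K}{X_{l0}}\right)+(c-1)\mu_l t_{eq,l},
\end{align*}
with the equilibrium relation of Lem.~\ref{lemma:lim_tp} and Cor.~\ref{cor:lim_tp_WM}. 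The goal would be to show that the deficit $t^{sat}_{l,1}-t_1$ is recovered in later cycles, so that the partial sums $\sum_{q\leq c}t_q$ are bounded below by $t^{sat}_{l,1}+(c-1)t_{eq,l}$ up to the slack absorbed by the strict inequality. I am not confident this works. I expect it to be the hardest step, and I would check it against the single-competitor case numerically before committing.
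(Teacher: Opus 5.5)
Your item 1 is correct and follows the paper's own argument. Your $t_1$-version of item 2, including the positivity of $t_{eq,i}-t_{eq,l}$, is also correct. The gap is the step you flag, and it cannot be closed, because item 2 is false as stated.

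Your repair needs $\sum_{q\le c}t_q\ge\frac{1}{\mu_l}\log(K/X_{l0})+(c-1)t_{eq,l}$. By Eq.~\eqref{eq:Xf}, with $I_e=1$ for the surviving $S_l$, this is equivalent to $X_l(t_c^0+t_c)\ge K$, i.e.\ $S_l$ alone filling the carrying capacity at the end of growth in cycle $c$. That never happens while $S_i$ is present. The deficit $\frac{1}{\mu_l}\log\bigl(K/X_l(t_c^0+t_c)\bigr)$ is also unbounded, so no recovery in later cycles can absorb it.

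Concretely, take $\mu_i=\mu_l=1$, $t_k=1$, $D=e^{-1}$, $k_l=0$, $k_i=3$, $t_g=2$, $X_e=1$, $X_{l0}=e$, $X_{i0}=e^5$, $K=e^{10}$. Then $t_{eq,l}=1<t_g$ and $t_{eq,i}=4$. The end-of-growth totals in cycles $1,2,3$ are $e^7+e^3$, $e^5+e^4$ and $e^3+e^5$, all below $K$, so $t_1=t_2=t_3=2$ and $J=\{l\}$. Since $X_i(t_c^f)=e^{5-2c}$, we get $c_{e,i}=3$, which falls in the third case of $L_{e,i}(l)$. However, $L_{e,i}(l)=(5+9-1)/3=13/3>3$. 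Letting $K\to\infty$ leaves $c_{e,i}=3$ unchanged and sends $L_{e,i}(l)\to\infty$.

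The paper's own proof makes exactly the wrong-direction move you identified. It combines $c_{e,i}t_{eq,i}-\sum_{c\le c_{e,i}}t_c>\frac{1}{\mu_i}\log(X_{i0}/X_e)$ with the \emph{upper} bound $\sum_{c\le c_{e,i}}t_c\le\frac{1}{\mu_l}\log(K/X_{l0})+(c_{e,i}-1)t_{eq,l}$. These two facts do not yield the stated inequality.

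What is provable is a two-sided estimate. Your lower bound
\[
c_{e,i}>\frac{\log(X_{i0}/X_e)/\mu_i+t_1-t_{eq,l}}{t_{eq,i}-t_{eq,l}}
\]
holds. For the other side, use the paper's upper bound on $\sum_q t_q$ at cycle $c_{e,i}-1$ instead. There $X_i(t^f_{c_{e,i}-1})\ge X_e$ gives $(c_{e,i}-1)t_{eq,i}-\sum_{q<c_{e,i}}t_q\le\frac{1}{\mu_i}\log(X_{i0}/X_e)$, and combining this with the upper bound yields $c_{e,i}\le L_{e,i}(l)+1$. So $L_{e,i}(l)$ is really an upper-type bound.

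You should therefore state and prove the $t_1$-version rather than aim for the claimed one. Use $\le$ in the degenerate cases $c_{e,i}\in\{0,1\}$, where $L_{e,i}(l)=c_{e,i}$, as you noticed.

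A smaller point, inherited in item 1 from Cor.~\ref{cor:ext_cyc_totalext}: the least integer strictly greater than $x$ is $\lfloor x\rfloor+1$. This differs from $\lceil x\rceil$ when $x\in\mathbb{Z}$.
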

\begin{proof}
\textcolor{white}{b}\\
\vspace{-0.3cm}
\begin{enumerate}
    \item[1)] The demonstration is analogous to Cor.~\ref{cor:ext_cyc_totalext} by noting that, from Prop.~\ref{prop:SC} and Cor.~\ref{cor:t_c=t_g}.2, any strain $S_l$ surviving cyclic treatment verifies:
    \begin{align*}
     t_{eq,l}\underbrace{=}_{\text{Prop.~\ref{prop:SC}}}\min\{t_{eq,j}\mid j\in J\} \underbrace{\leq}_{\text{Cor.~\ref{cor:t_c=t_g}.2}} t_c \leq t_g,\quad c> 1.
    \end{align*}
    \item[2)]
        Let us assume that strain $S_i$ $(1\leq i\leq n)$ goes extinct. Since the total population does not become extinct, we can take the following set formed by the indexes of the strains surviving cyclic treatment:
        \begin{align*}
           \Lambda = \{l\in J\mid  t_{eq,l}\leq t_{eq,j},\; j\in J\}\neq \varnothing,  
        \end{align*}
        and $i\notin \Lambda,$ by Prop.~\ref{prop:SC}. Note that the demonstration is trivial for  $c_{e,i}=0$ and $c_{e,i}=1,$  as observed in Cor.~\ref{cor:ext_cyc_totalext}. Conversely, if the extinction cycle of strain $S_i$ verifies $c_{e,i}>1,$ we can use Lem.~\ref{lemma:ext_Xg} to obtain:
        \begin{align*}
            X_i(t_{c_{e,i}}^f) < X_e &\Leftrightarrow X_i(t_{c_{e,i}}^f) \underbrace{=}_{\text{Eq.~\eqref{eq:Xf}}}D^{c_{e,i}}X_{i0}\exp\left(\mu_i\sum_{c=1}^{c_{e,i}} t_c - c_{e,i} k_i(C) t_k\right)<X_e \Leftrightarrow\\
        &c_{e,i}\left(\log(D) - k_i(C)t_k\right) + \mu_i\sum_{c=1}^{c_{e,i}} t_c < \log\left(\frac{X_e}{X_{i0}}\right)\Leftrightarrow  c_{e,i}t_{eq,i} - \sum_{c=1}^{c_{e,i}} t_c  > \frac{1}{\mu_i} \log\left(\frac{X_{i0}}{X_{e}}\right).
    \end{align*}

    However, we lack a closed-form expression linking the extinction cycle with the summation:
    \begin{align*}
        \tau_{c_{e,i}} = \sum_{c=1}^{c_{e,i}} t_c,
    \end{align*}
    which represents the time spent by the population in the exponential growth phase to the end of the extinction cycle $c_{e,i}>1.$ Then, we cannot find the exact value of the extinction cycle for strain $S_i,$ but we can take into account that any strain $S_l$ surviving cyclic treatment verifies:
    \begin{align*}
     l\in\Lambda \Rightarrow&\;    X_l(t_{c_{e,i}}^f) \underbrace{=}_{\text{Eq.~\eqref{eq:Xf}}} X_{l0}D^{c_{e,i}}\exp\left(\mu_l\sum_{c=1}^{c_{e,i}} t_c - c_{e,i}k_l(C)t_k\right) \underbrace{\leq}_{\text{Eq.~\eqref{eq:t_sat}}} DK\exp\left(-k_l(C)t_k\right)\Leftrightarrow\\
     &\; \sum_{c=1}^{c_{e,i}} t_c - c_{e,i} t_{eq,l}\leq \frac{1}{\mu_l} \log\left(\frac{K}{X_{l0}}\right) - t_{eq,l}\Leftrightarrow \sum_{c=1}^{c_{e,i}} t_c \leq \frac{1}{\mu_l} \log\left(\frac{K}{X_{l0}}\right) + (c_{e,i}-1) t_{eq,l},
    \end{align*}
    so that we can substitute the unknown bound for $c_{e,i}$ by the following:
    \begin{align*}
        c_{e,i}t_{eq,i} - \sum_{c=1}^{c_{e,i}} t_c  > \frac{1}{\mu_i} \log\left(\frac{X_{i0}}{X_{e}}\right),\quad \sum_{c=1}^{c_{e,i}} t_c \leq \frac{1}{\mu_l} \log\left(\frac{K}{X_{l0}}\right) + (c_{e,i}-1) t_{eq,l},\quad l\in\Lambda \Rightarrow \\
        c_{e,i}t_{eq,i} - \frac{1}{\mu_l} \log\left(\frac{K}{X_{l0}}\right) - (c_{e,i}-1) t_{eq,l} > \frac{1}{\mu_i} \log\left(\frac{X_{i0}}{X_e}\right),\quad l\in\Lambda\Rightarrow\\
        c_{e,i}\left(t_{eq,i} - t_{eq,l}\right) > \frac{1}{\mu_i} \log\left(\frac{X_{i0}}{X_e}\right) + \frac{1}{\mu_l} \log\left(\frac{K}{X_{l0}}\right) - t_{eq,l},\quad l\in\Lambda\Rightarrow \\[0.2cm]
        c_{e,i}  > L_{e,i}(l)=\frac{\log\left(X_{i0}/X_e\right)/\mu_i + \log\left(K/X_{l0}\right)/\mu_l - t_{eq,l}}{t_{eq,i} - t_{eq,l}},\quad l\in\Lambda.
    \end{align*}

\end{enumerate}

Note the difference between the bound in the extinction cycle for the case where the total population becomes extinct and the case where some strain is selected. When the total population becomes extinct after the first cycle, Cor.~\ref{cor:ext_cyc_totalext} states that the extinction cycle for any strain depends only on its bacterial traits and setup parameters (i.e., inoculum, growth and killing time, and extinction limit). Then, the extinction dynamics of the different strains from the first treatment cycle onwards are independent of each other. Nevertheless, when some strains survive the cyclic protocol, bacterial competition influences the extinction cycle of the strains which are going extinct, whether in isolation or not. That is, the extinction cycle for the strains going extinct also depends on the bacterial traits of the strains being selected and on their inocula, as shown in Cor.~\ref{cor:ext_cycle}. As a consequence, the extinction cycle can be seen as an indirect measure of the selective strength between competing strains.

\end{proof}


\section{Conclusions}\label{sec:Conclusions}

In this work, we introduce a mathematical model for cyclic antimicrobial treatment in bacterial populations formed by multiple strains. The treatment alternates growth periods, during which the strains compete indirectly for the available resources, and antimicrobial killing periods via some bactericidal substance. The model thus considers the growth and kill rates for the different strains as the fundamental parameters (bacterial traits) characterising the bacterial dynamics under cyclic treatment. The growth rate determines the ability of the strains to compete for resources during the growth periods, so that the fastest-growing strains reproduce more rapidly and consume all resources before slower-growing strains can convert them into biomass. On the other hand, the kill rate determines the  ability of the strains to survive the killing periods and compete in the next growth round. The model formulation was kept simple, minimising the number of bacterial traits considered, to analyse the effects of the most basic parameters (growth and kill rates) on the selection and extinction dynamics of the different strains during cyclic treatment.

Next, we used the proposed model to study the selection and extinction dynamics of the population as a function of the bacterial traits and the setup parameters for the cyclic protocol. In the first place, we derived necessary and sufficient conditions for determining the success of cyclic treatment in eliminating the bacterial population, so that none of the strains is selected. Interestingly, we found that the population can be eliminated with just one treatment cycle due to bacterial competition, even if the strains survive in isolation. Thus, mutant strains with reduced antimicrobial susceptibility, which are typically present in the population at very low initial frequencies, cannot be selected if competition with the ancestor (much more abundant initially) prevents the mutant from propagating beyond the first cycle. 

Once the condition for total population extinction was established, we used the mathematical model to derive selection conditions for the different strains when the cyclic antimicrobial protocol fails to eliminate the population. It was found that bacterial selection depends on the time required for the strains to reach equilibrium between new bacteria formed during growth and cell removal after antimicrobial killing and dilution. Then, the strains that are selected under cyclic treatment are those that minimise the equilibrium time, thereby achieving a better compromise between bacterial growth and death. This result is especially useful for determining the fixation of mutant strains over the ancestor when they compete under cyclic treatment, since mutations that reduce antimicrobial susceptibility usually entail a fitness cost (reduced growth) directly proportional to the reduction in antimicrobial effect. In this regard, we showed that the selection conditions can be equivalently expressed as a trade-off between fitness cost and increased antimicrobial survival conferred by the mutation, so that the mutant outcompetes the ancestor when the survival advantage during antimicrobial killing compensates for the fitness cost during growth.

The work developed here can be used from a theoretical perspective, allowing the study of the bacterial selection dynamics in antimicrobial protocols. The given approach is general enough to cover a wide range of selection scenarios involving antimicrobial resistance or tolerance, for example, and considering biocides with different antimicrobial properties (such as antibiotics or disinfectants). Additionally, the developed approach can be used to assist real-life antimicrobial protocols involving resistant or tolerant strains. Indeed, the selection and extinction conditions derived from the mathematical model can be applied directly to predict the success of cyclic treatment or to design optimal protocols, and the expression for the extinction cycles helps to determine the number of cycles necessary for the antimicrobial protocol to eliminate the different strains (if possible).

In future work, we hope the proposed approach  will serve as a basis for studying the influence of other bacterial characteristics that can be of importance for selection under cyclic treatment. Lag time or growth efficiency, for example, are bacterial traits that can vary significantly between strains and thus have a great impact on competition dynamics depending on the case. On the other hand, the different bacterial strains that form the population can exhibit significant phenotypic variability, leading to heterogeneous antimicrobial killing. For example, heteroresistance and persistence~\cite{2012Cogan,2019Balaban} translate into distinct reversible states (phenotypes) within the same (isogenic) strain, which differ in their antimicrobial susceptibility and other biological characteristics. Then, the transfer rate between different phenotypes and the intra-strain variability in other bacterial characteristics are parameters that can determine selection and seriously impact the outcome of antimicrobial protocols.

\bibliographystyle{unsrt}  
\bibliography{References_Arkiv}

\end{document}